\newtheorem{lem}{Lemma}
\tikzstyle directed=[postaction={decorate,decoration={markings,mark=at position .65 with {\arrow[scale=1.2]{>}}}}]
\newcommand{\erase}[1]{}
\renewcommand{\bar}{\overline}
\newcommand{\oo}[2]{\left\rrbracket #1,#2\right\llbracket}
\newcommand{\co}[2]{\left\llbracket #1,#2\right\llbracket}
\newcommand{\cc}[2]{\left\llbracket #1,#2\right\rrbracket}
\newcommand{\ie}{\textit{i.e. }}
\newcommand\compl[1]{\overline{#1}}
\DeclareMathOperator{\GNECC}{\mathsf{NECC}}
\DeclareMathOperator{\GINECC}{\mathsf{INECC}}
\DeclareMathOperator{\NECCs}{NECCs}
\DeclareMathOperator{\NECC}{NECC}
\DeclareMathOperator{\INECC}{INECC}
\DeclareMathOperator{\COLOR}{COLOR}
\DeclareMathOperator{\CC}{CC}
\DeclareMathOperator{\NEC}{NEC}
\newtheorem{remark}{Remark}
\newtheorem{definition}{Definition}
\newtheorem{theorem}{Theorem}
\newtheorem{conjecture}{Conjecture}
\begin{document}
	
\title{On the cost of simulating a parallel Boolean automata network by a block-sequential one}

\author[1]{Florian Bridoux}
\author[2]{Pierre Guillon}
\author[1]{K\'evin Perrot}
\author[1,3]{Sylvain Sen\'e}
\author[2]{Guillaume Theyssier}

\affil[1]{Universit\'e d'Aix-Marseille, CNRS, LIF, Marseille, France}
\affil[2]{Universit\'e d'Aix-Marseille, CNRS, Centrale Marseille, I2M, 
	Marseille, France}
\affil[3]{Institut rhône-alpin des systèmes complexes, IXXI, Lyon, France}

\maketitle

\begin{abstract}
In this article we study the minimum number $\kappa$ of additional automata that a Boolean automata network (BAN) associated with a given block-sequential update schedule needs in order to simulate a given BAN with a parallel update schedule. We introduce a graph that we call $\GNECC$ graph built from the BAN and the update schedule. We show the relation between $\kappa$ and the chromatic number of the $\GNECC$ graph. Thanks to this $\GNECC$ graph, we bound $\kappa$ in the worst case between $n/2$ and $2n/3+2$ ($n$ being the size of the BAN simulated) and we conjecture that this number equals $n/2$. We support this conjecture with two results: the clique number of a $\GNECC$ graph is always less than or equal to $n/2$ and, for the subclass of bijective BANs, $\kappa$ is always less than or equal to $n/2+1$.\\
\emph{Keywords:} Boolean automata networks, intrinsic simulation, block-
	sequential update schedules.
\end{abstract}


\section{Introduction}\label{intro}

In this article, we study Boolean automata networks (BANs). A BAN can be seen as a set of two-states automata interacting with each other and evolving in a discrete time. BANs have been first introduced by McCulloch and Pitts in the $1940^s$~\cite{McCulloch1943}. They are common representational models for natural dynamical systems like neural or genetic networks~\cite{Goles1990,Hopfield1982,Kauffman1969,Kauffman1971,Thomas1973}, but they are also computational models with which we can study computability or complexity. In this article we are interested in intrinsic simulations between BANs, \ie simulations that focus on the dynamics rather than the computational power. More concretely, given a BAN $A$ we want to find a BAN $B$ which reproduces the dynamics of A while it satisfies some constraints. There have been few studies using intrinsic simulation between BANS before the $2010^s$~\cite{Bruck1988,Goles1993,Orponen1997,Tchuente1986}. More recently, this notion has received a new interest~\cite{Melliti2013,Melliti2016,Noual2012,Noual2013} and we are convinced that it is essential and deserves to be dealt with. Meanwhile, intrinsic simulation of many other similar objects (cellular automata, tilings, subshifts, self-assembly, etc.) has been really developing since $2000$~\cite{Delorme2011a,Delorme2011b,Doty2012,Guillon2011,Lafitte2007,Lafitte2009,Ollinger2012}.

A given BAN can be associated with several dynamics, depending on the schedule (\ie the order) chosen to update the automata. In this article, we will consider all block-sequential update schedules: we group automata into blocks, and we update all automata of a block at once, and iterate the blocks sequentially. Among these update schedules are the following classical ones: the parallel one (a unique block composed of $n$ automata) and the $n!$ sequential ones ($n$ blocks of $1$ automaton). The pair of a BAN and its update schedule is called a scheduled Boolean automata network (SBAN). 

For the last 10 years, people have studied the influence of the update schedules on the dynamics of a BAN~\cite{Aracena2009,Demongeot2008,Goles2012,Goles2008}. Here, we do the opposite. We take a SBAN, and try to find the smallest SBAN with a constrained update schedule which simulates this dynamics. For example, let $N$ be a parallel SBAN of size 2 with 2 automata that exchange their values. There are no SBANs $N'$ of size $2$ with a sequential update schedule which simulates $N$. Indeed, when we update the first automaton, we necessarily erase its previous value. If we did not previously save it, we cannot use the value of the first automaton to update the second automaton. Thus, $N'$ needs an additional automaton to simulate $N$ under the sequential update schedule constraint. A SBAN $N$ of size $n$ with a parallel update schedule can always be simulated by a SBAN $N'$ of size $2n$ with a given sequential update schedule. Indeed, we just need to add $n$ automata which copy all the information from the original automata and then, we compute sequentially the updates of the originals automata using the saved information. The goal of this article is to establish more precise bounds on the number of required additional automata, function of $n$, in the worst case.\medskip

In Section~\ref{section_def}, we define BANs and detail the notion of simulation that we use.
In Section~\ref{section_added_automata}, we consider the dynamics of a BAN $F$ with automata set $V$ and the parallel update schedule and we consider a block-sequential update schedule $W$. We focus on the minimum number $\kappa(F,W)$ of additional automata that a SBAN needs to simulate this dynamics with an update schedule identical to $W$ on $V$.
In Section~\ref{section_necc}, we define a graph which connects configurations depending on a BAN $F$ and a block-sequential update schedule $W$. We prove that the chromatic number of this graph determines the number $\kappa(F,W)$ defined in the previous section. We also state the following conjecture: $\kappa(F,W)$ is always less than or equal to $n/2$, where $n$ is the size of the BAN $F$.
In Section~\ref{section_inecc}, we define another graph constructed from the previous graph where we identify configurations which have the same image. We prove that the chromatic number of this new graph is always greater than that of the previous graph. We deduce an upper bound for $\kappa(F,W)$.
In Section~\ref{section_clique_number}, we try to support our conjecture by finding an upper bound for the clique number of the graph defined in Section~\ref{section_necc}.
Finally, in Section~\ref{section_special_classes}, we study $\kappa(F,W)$ in the case where F is bijective.

\section{Definitions and notations} \label{section_def} 

\subsection{BANs and SBANs}

In this article, unless otherwise stated, BANs have a size $n \in \mathbb{N}$, which means that they are composed of $n$ automata numbered from $0$ to $n-1$. Usually, we denote this set of automata by $V= \{ 0, 1, \dots, n-1 \}$ (which will be abbreviated by $\co0n$). Each automaton can take two states in the Boolean set $\mathbb{B} = \{ 0, 1 \}$.
A \emph{configuration} is a Boolean vector of size $n$, interpreted as the sequence of states of the automata of the BAN. In other words, if $x$ is a configuration, then $x \in \mathbb{B}^n$ and $x = (x_0, \dots, x_{n-1})$ with $x_i$ the state of automaton $i$ (for all $i$ in $V$).
For all $I\subseteq V$, we denote by $x_I$ the restriction of $x$ to $I$. In other words, if $I = \{i_1, i_2, \dots, i_p\}$ with $i_1 < i_2 < \dots < i_p$ then $x_I = (x_{i_1}, x_{i_2}, \dots, x_{i_p})$. We also denote by $x_{\bar{I}}$ the restriction of $x$ to $V \setminus I$.

For all $b \in \mathbb{B}$, we denote by $\bar{b}$ the negation of the state of $b$. In other words, $\bar{0} = 1$ and $\bar{1} = 0$.
We also denote by $\bar{x}$ the negation of $x$, such that $\bar{x} = (\overline{x_0}, \dots, \overline{x_{n-1}})$.
Furthermore, we denote by $\bar{x}^i$ or $\bar{x}^I$ the negation of $x$ respectively restricted to an automaton $i$ or a set $I$ of automata, that is, $\bar x^I_i=\bar{x_i}$ if $i\in I$, and $\bar x^I_i=x_i$ if $i\in V\setminus I$. 

In this article, we only study BANs with block-sequential update schedules. A SBAN $N = (F,W)$ is characterized by:
\begin{itemize}
\item a global update function $F:\mathbb{B}^{n} \rightarrow \mathbb{B}^{n}$ which 
	represents the BAN;
\item a block-sequential update schedule $W$.
\end{itemize}

The \emph{global update function} of a BAN is the collection of the local update functions of the BAN: we have $F(x) = (f_0(x), \dots, f_{n-1}(x))$, where for all $i \in V$, $f_i:\mathbb{B}^{n} \rightarrow \mathbb{B}$ is the local update function of automata $i$. We also use the $I$-update function $F_I$, with $I \subseteq V$, which gives a configuration where the states of automata in $I$ are updated and the other ones are not.
In other words, $\forall i \in V,\ F_I(x)_i = f_i(x)$ if $ i \in I $ and $x_i$ otherwise.
And, for singleton, we simply write $F_i(x) = F_{\{i\}}(x)$.

\begin{remark}
It is important not to confuse $F_I(x)$ and $F(x)_I$. The first one is the $I$-update function that we have just defined. The second is the configuration $F(x)$ restricted to $I$.
\end{remark}  

A \emph{block-sequential update schedule} is an ordered partition of $V$. The set of ordered partitions of $V$ is denoted by $\overrightarrow{\mathscr{P}}(V)$. Let $W \in \overrightarrow{\mathscr{P}}(V)$ and $p=|W|$ and ${W = (W_0,\ldots,W_{p-1})}$. We make particular use of $F^W$ defined as $F^W = F_{W_{p-1}} \circ \dots \circ F_{W_0}$. If $x \in \mathbb{B}^n$ is the configuration of the BAN at some time step, then $F^W(x)$ is the configuration of the BAN at the next step. There are two very particular kinds of block-sequential update schedules:
\begin{itemize}
\item the parallel update schedule where all automata are updated at the same time step. 
	So, we have $W = [V]$ (\ie $|W| = 1$ and $W_0 = V$) and $F^W = F$;
\item the sequential update schedules where automata are updated one at the time. So, we have $|W| = n$ and $\forall i \in \co0n,\ |W_i| = 1$.
\end{itemize}

For any $j \in \cc0p$, we denote $W_{<j} = \bigcup \limits_{i=0}^{j-1} W_i$. In particular, we have $W_{<0} = \emptyset$ and $W_{<p} = V$. Furthermore, for any $i \in \cc0p$, we denote $W^{<i} = (W_0, W_1, \dots, W_{i-1})$. $W^{<i}$ is an ordered partition of $W_{<i}$. In particular, we have $W^{<0} = [\ ]$ (the empty vector) and $W^{<p} = W$.

We will often use the following two notations:
\begin{enumerate}
\item[(i)] $F^{W^{<j}} = F_{W_{j-1}} \circ \dots \circ F_{W_0}$ is the function which makes the first $j$ steps of the transition of the SBAN $(F,W)$;
\item[(ii)] $F_{W_{<j}} = F_{W_0 \cup \dots \cup W_{j-1}}$ is the function which updates only the automata in the first $j$ blocks of 
$W$.
\end{enumerate}

Let $W \in \overrightarrow{\mathscr{P}}(V)$ be an update schedule. We know that each automaton of a block-sequential SBAN is updated only in one step of the update schedule. We denote by $W(i)$ the step at which $i$ is updated. More formally, $\forall i \in V, W(i)$ is the number $j \in \co0p$ such that $i \in W_j$.

\subsection{Simulation}

Here, we define the notion of simulation used in this article. We consider that a SBAN $N$ of size $m$ simulates another SBAN $N'$ of size $n$ if there is a projection from $\mathbb{B}^m$ to $\mathbb{B}^n$ such that the projection of the update in $N'$ equals the update in $N$ of the projection.

\begin{definition}
Let $F: \mathbb{B}^n \to \mathbb{B}^n$ and $F': \mathbb{B}^m \to \mathbb{B}^m$ with $m \geq n$, $V = \co0n$ and $V'= \co0m$, $W \in \overrightarrow{\mathscr{P}}(V)$ and $W' \in \overrightarrow{\mathscr{P}}(V')$. Let $h :  V \to V'$ be an injective function and $\varphi_h: \mathbb{B}^m \to \mathbb{B}^n$ be defined by $\varphi_h(x)=(x_{h(i)})_{i \in V}$.
We say that $(F',W')$ \emph{$h$-simulates} $(F,W)$, and note $(F',W') \vartriangleright^h (F,W)$, if $\varphi_h \circ F'^{W'} = F^{W} \circ \varphi_h$.
Moreover, $(F',W')$ \emph{simulates} $(F,W)$, which is denoted by $(F',W') \vartriangleright (F,W)$ if there is a $h$ such that $(F',W') \vartriangleright^h (F,W)$.
\end{definition}

In this article we often use an $id$-simulation which is a $h$-simulation with $h$ the identity function (${h(i)=i}$).

\section{Number of required additional automata } \label{section_added_automata} 

In this section, we define the main object of this article. Given a BAN $F$ with automata $V$ and a block-sequential update schedule $W\in\overrightarrow{\mathscr{P}}(V)$, we consider the smallest SBAN $(F',W')$ which simulates the parallel SBAN $(F,[V])$, where $W'$ extends $W$ by preserving its order. 
We could as well study the problem of finding a block-sequential SBAN $(F',W')$ which simulates another block-sequential SBAN $(G,W)$. However, this problem is in fact the same. Indeed, for any block-sequential SBAN $(G,W)$, the parallel SBAN $(G^W,[V])$ $id$-simulates $(G,W)$.

Let us formalize the notion.
From an update schedule $W$ and a BAN of size $n$, we define the notion of update schedule extending $W$ for a bigger BAN of size $m$. Let $V' = \co0m$. Let $h : V \to V'$ be an injective function. We denote by $\mathscr{E}_h(W,V') $ the set of update schedules $W'$ extending $W$ such that each $W'$ preserves the order of $W$ for the projection by $h$ of the automata of $V$.
That is to say, if one automaton is updated before another one according to $W$, then the projection of these automata into $V'$ will preserve the same update order in $W'$.
More formally, $\mathscr{E}_h(W,V') = \{ W' \in \overrightarrow{\mathscr{P}}(V')\ |\ \forall i \in V, 
W(i) \le W(i') \iff W'(h(i)) \le W'(h(i')) \}$.
In particular, if two automata $i,j\in V$ are updated at the same step $W(i)=W(j)$, then the projections $h(i),h(j)$ of these automata are updated at the same step $W(h(i))=W(h(j))$ in $W'$. In other words, $h$ induces a map $\tilde h:\co0p\to\co0{p'}$ such that $W(h(i))=\tilde h(W(i))$ for all $i\in V$, and $\tilde h(W)$ is a subordered partition of $W'$.

\begin{definition}
If $F$ is a BAN over automata $V=\co0n$ and $W\in\overrightarrow{\mathscr{P}}(V)$ is an update schedule, we define $\kappa(F,W)$ as the smallest $k$ such that there exist 
 an update schedule $W' \in \mathscr{E}_h(W,V')$ extending $W$ and a BAN $F': \mathbb{B}^{n+k} \to \mathbb{B}^{n+k}$ such that  $(F',W') \vartriangleright (F,[V])$, with $V'=\co0{n+k}$.

Furthermore, $\kappa_{n}$ is the value of $\kappa(F,W)$ in the worst case among all SBANs with automata $V$. In other words, $\kappa_{n} = max(\{ \kappa(F,W) \ |\ F : \mathbb{B}^{n} \to \mathbb{B}^{n} \text{ and } W \in \overrightarrow{\mathscr{P}}(V) \} )$.
\end{definition}

\section{$\NECCs$ set and $\GNECC$ graph} \label{section_necc} 

In order to answer the main problem of this article which is is to bound the values of $\kappa_n$, we introduce a new concept: the not equivalent and confusable configurations or $\NECCs$ and the $\GNECC$ graph. Theorem~\ref{theorem_chro} will show that the logarithm of the chromatic number of the $\GNECC$ graph of a SBAN and the $\kappa$ of this SBAN are equal. $\NEC$ (the acronym standing for \emph{non-equivalent configurations}) is the set of pairs of configurations with different images by $F$. In other words,
\begin{equation*} 
	\NEC_{F} = \{ (x,x') \in \mathbb{B}^n \times \mathbb{B}^n\ |\ F(x) \neq 
	F(x')\}\text{.}
\end{equation*}
We call \emph{confusable configurations} and denote by $\CC_{F,W}$, or simply $\CC$ (the acronym standing for \emph{confusable configurations}), the set of pairs of configurations which become identical when we update the first $i$ blocks of $W$ for some $i \in \co0p$). Formally, 
\begin{equation*}
	\CC = \{ (x,x') \in \mathbb{B}^n \times \mathbb{B}^n\ |\ \exists i \in \cc0p,\ F_{W_{<i}}(x) = F_{W_{<i}}(x')\}\text{.}
\end{equation*}

\begin{definition}
\emph{$\NECC_{F,W}$}, or simply $\NECC$ (the acronym standing for \emph{not equivalent and confusable configurations}), is the set of pairs of configurations which are confusable and not equivalent at the same time, $\NECC_{F,W} = \CC_{F,W} \cap \NEC_F$.
\end{definition}

\newcommand{\ICC}{\mathsf{CC}}
Also, for all $x, x' \in \mathbb{B}^n$, we denote by $\ICC_{F,W}(x, x')$ (or just $\ICC_{F,W}(x,x')$) the set of time steps $i$ which make them confusable. More formally, $\forall x,x' \in \mathbb{B}^n,\ \ICC_{F,W}(x,x') = \{i \in \cc0p\ |\ F_{W_{<i}}(x) = F_{W_{<i}}(x')\}$.

\begin{remark}
We have $\ICC(x,x') = \emptyset$ if and only if $(x,x') \not \in \CC$. 
\end{remark}

\begin{definition}
The \emph{$\GNECC$ graph}, denoted by $(\mathbb{B}^n,\NECC)$, is the nondirected graph which has the set of configurations $\mathbb{B}^n$ as nodes and the set of $\NECC$ pairs as edges.
\end{definition}

In the sequel, we make a particular use of two concepts of graph theory.
A \emph{valid coloring} of $G$ is a coloring of all the nodes of $G$ such that two adjacent nodes do not have the same color.
We denote by $\chi(G)$ the \emph{chromatic number} of the graph $G$, namely the minimum number of colors of a valid coloring of $G$. Furthermore, the chromatic number of the $\GNECC$ graph is denoted by $\chi(\GNECC) = \chi((\mathbb{B}^n,\NECC))$. We see in Lemma~\ref{lem1} that we can get a valid coloring of the $\GNECC_{F,W}$ graph from the SBAN $(F',W')$ which simulates $(F,[V])$. This coloring does not use more than $2^k$ colors with $k$ the number of additional automata of $F'$. We color the configuration of the $\GNECC$ graph using the values of the added automata after the update.

\begin{lem} 
\label{lem1}
For any BAN $F: \mathbb{B}^{n} \to \mathbb{B}^{n}$ and any block-sequential update schedule $W$, $\kappa(F,W) \geq \lceil \log_2(\chi(\GNECC_{F,W})) \rceil$.
\end{lem}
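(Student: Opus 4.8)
\emph{Proof plan.} The plan is to prove the equivalent statement $\chi(\GNECC_{F,W})\le 2^{\kappa(F,W)}$; since $\kappa(F,W)$ is a nonnegative integer this yields $\kappa(F,W)\ge\log_2(\chi(\GNECC_{F,W}))$ and hence $\kappa(F,W)\ge\lceil\log_2(\chi(\GNECC_{F,W}))\rceil$. Write $k=\kappa(F,W)$ and unfold the definition: there are $V'=\co0{n+k}$, an injection $h\colon V\to V'$, an update schedule $W'\in\mathscr{E}_h(W,V')$ and a BAN $F'\colon\mathbb{B}^{n+k}\to\mathbb{B}^{n+k}$ with $\varphi_h\circ F'^{W'}=F\circ\varphi_h$ (recall $F^{[V]}=F$). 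From this data I would build an explicit valid coloring of the $\GNECC_{F,W}$ graph with at most $2^k$ colors. For $x\in\mathbb{B}^n$, let $y_x\in\mathbb{B}^{n+k}$ be its canonical lift: $(y_x)_{h(i)}=x_i$ for $i\in V$ and $(y_x)_a=0$ for every additional automaton $a\in V'\setminus h(V)$, so that $\varphi_h(y_x)=x$. The coloring is $c(x):=\bigl(F'^{W'}(y_x)\bigr)_{V'\setminus h(V)}\in\mathbb{B}^k$, i.e.\ the values of the $k$ additional automata after one simulated parallel step started from $y_x$. This uses at most $2^k$ colors, so the whole task reduces to showing that $c$ is valid, namely that $c(x)=c(x')$ together with $(x,x')\in\CC_{F,W}$ forces $F(x)=F(x')$, so that $(x,x')\notin\NECC_{F,W}$.

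To establish this I would follow the trajectory of the computation of $F'^{W'}(y_x)$. Because $W'\in\mathscr{E}_h(W,V')$, the induced map $\tilde h\colon\co0p\to\co0{p'}$ with $W'(h(i))=\tilde h(W(i))$ is a strictly increasing injection, so $W'$ refines $\tilde h(W)$: no automaton of $h(V)$ lies in a block of $W'$ strictly between levels $\tilde h(j-1)$ and $\tilde h(j)$ (nor before $\tilde h(0)$ nor after $\tilde h(p-1)$). For $j\in\cc0p$ let $Z^{(j)}_x$ be the configuration reached from $y_x$ after executing the blocks of $W'$ up to and including level $\tilde h(j-1)$, with $Z^{(0)}_x:=y_x$ and $Z^{(p)}_x:=F'^{W'}(y_x)$ (the trailing blocks past level $\tilde h(p-1)$ being folded into the last step). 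The two facts I need, both hinging on the fact that in a block-sequential schedule every automaton is updated \emph{exactly once}, are:
\begin{enumerate}
\item[(i)] for $i\in V$, automaton $h(i)$ is updated only at level $\tilde h(W(i))$ of $W'$; thus in $Z^{(j)}_x$ it still holds $x_i$ when $W(i)\ge j$ and it already holds its final value otherwise, and that final value is $F(x)_i=f_i(x)$ by the simulation equality. Hence $\varphi_h(Z^{(j)}_x)=F_{W_{<j}}(x)$ for every $j$;
\item[(ii)] each additional automaton $a$ is updated exactly once, at a block of $W'$ that does not depend on $x$; since $(y_x)_a=0$, in $Z^{(j)}_x$ it holds either $0$ or its final value $c(x)_a$. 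Hence $\bigl(Z^{(j)}_x\bigr)_{V'\setminus h(V)}$ is the image of $c(x)$ under a fixed map depending only on $j$ and $W'$.
\end{enumerate}

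Granting (i) and (ii), the conclusion is immediate: if $c(x)=c(x')$ and $F_{W_{<j}}(x)=F_{W_{<j}}(x')$ for some $j$ (a witness of $(x,x')\in\CC_{F,W}$), then (i) gives $\varphi_h(Z^{(j)}_x)=\varphi_h(Z^{(j)}_{x'})$ and (ii) gives $\bigl(Z^{(j)}_x\bigr)_{V'\setminus h(V)}=\bigl(Z^{(j)}_{x'}\bigr)_{V'\setminus h(V)}$; these two restrictions pin down the whole configuration, so $Z^{(j)}_x=Z^{(j)}_{x'}$, and applying the remaining (fixed) blocks of $W'$ gives $F'^{W'}(y_x)=F'^{W'}(y_{x'})$, whence $F(x)=\varphi_h(F'^{W'}(y_x))=\varphi_h(F'^{W'}(y_{x'}))=F(x')$. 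I expect the main obstacle to be nailing down observation (i): one must argue that the intermediate configuration restricted to $h(V)$ is \emph{exactly} $F_{W_{<j}}(x)$, even though the simulation hypothesis only guarantees correctness of a full step — the key being that ``updated exactly once'' forces $h(i)$ to hold its final value $f_i(x)$ from the moment it is updated onwards. The surrounding bookkeeping (the refinement map $\tilde h$, the trailing blocks of $W'$, and working with a general injection $h$ rather than the identity) is where the remaining care is needed, but it should be routine.
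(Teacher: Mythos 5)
Your proposal is correct and follows essentially the same route as the paper: color each configuration $x$ by the values of the $k$ additional automata after one simulated step from the zero-padded lift of $x$, then show validity by comparing the two trajectories at the confusability step $\tilde h(j)$, using exactly your observations (i) and (ii) to conclude the intermediate configurations coincide and hence the images do too. The paper phrases this as a proof by contradiction rather than directly, but the construction and the two key facts are identical.
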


\begin{proof}
Let $h : V \to V'$ injective, $W' \in \mathscr{E}_h(W,V')$, $p = |W|$, $p' = |W'|$ and $F': \mathbb{B}^{n+k} \to \mathbb{B}^{n+k}$ such that $(F',W') \vartriangleright^{h} (F,[V])$.
We prove that $k \geq \lceil \log_2(\chi(\GNECC)) \rceil$. 
Let $z,z'$ be such that $z_{\compl{h(V)}} = z'_{\compl{h(V)}} = [0]^k$ and $(x,x')=(\varphi_h(z),\varphi_h(z')) \in \NECC$, and let us prove that $F'(z)_{\compl{h(V)}}\ne F'(z')_{\compl{h(V)}}$.
Suppose the contrary. Since $(x,x') \in \NECC$, we have $F(x) \ne F(x') $ and $\exists j \in \cc0p,$ $F_{W_{<j}}(x) = F_{W_{<j}}(x')$. 
Let $Z = F'^{{W'}^{<\tilde h(j)}}(z) =
$ and $Z' =F'^{{W'}^{<\tilde h(j)}}(z')
$. By assumption, we have $z_{\compl{h(V)}} = [0]^k = z'_{\compl{h(V)}}$ and $F(z)_{\compl{h(V)}} = F(z')_{\compl{h(V)}}$. Thus, $Z_{\compl{h(V)}} = Z'_{\compl{h(V)}}$.
Furthermore, we have $\varphi_h(Z) = F_{W_{<j}}(x) = F_{W_{<j}}(x') = \varphi_h(Z')$.
As a result, $Z_{h(V)} = Z'_{h(V)}$ and $Z = Z'$.
Consequently, $F'(z) = 
 F_{W'_{p-1}} \circ \dots \circ F_{W'_{\tilde h(j)}}(Z)$ and $F'(z') = 
 F_{W'_{p-1}} \circ \dots \circ F_{W'_{\tilde h(j)}}(Z') $ are equal. 
However, $(x,x') \in NEC$. Thus, $F'(z)_{h(V)} = F(x) \ne F(x') = F'(z')_{h(V)}$. As a consequence, we have also $F'(z) \ne F'(z')$. There is a contradiction.
We have proven that if $(x,x') \in \NECC$ then $F(z)_{\compl{h(V)}} \ne F(z')_{\compl{h(V)}}$.
In other words, a valid coloring of $\GNECC$ is obtained by coloring each vertex $x$ by $F(z)_{\compl{h(V)}}$, where $\phi_h(z)=x$ and $x_{\compl{h(V)}}=[0]^k$.
Hence $\{ F(z)_{\compl{h(V)}} | z_{\compl{h(V)}} = [0]^k  \}$ has at least $\chi(\GNECC)$ different values.
To encode these values, we need to have $k=|\compl{h(V)}| \geq \lceil  \log_2(\chi(\GNECC) \rceil $. So $\kappa(F,W) \geq \log_2(\chi(\GNECC))$.
\end{proof}

We see in Lemma~\ref{lem2} that we can get a SBAN $(F',W')$ which simulates $(F,[V])$ from a valid coloring of the $\GNECC_{F,W} graph$.

\begin{lem} 
\label{lem2}
For any BAN $F: \mathbb{B}^{n} \to \mathbb{B}^{n}$ and any block-sequential update schedule $W$, $\kappa(F,W) \leq \lceil \log_2(\chi(\GNECC_{F,W})) \rceil$.
\end{lem}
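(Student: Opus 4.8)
The plan is to establish the converse of Lemma~\ref{lem1}: from an optimal valid coloring of $\GNECC_{F,W}$ I will build a SBAN that $id$-simulates $(F,[V])$ using exactly $k:=\lceil\log_2(\chi(\GNECC_{F,W}))\rceil$ extra automata. Write $\chi=\chi(\GNECC_{F,W})$ and fix a valid coloring $c:\mathbb{B}^n\to\mathbb{B}^k$ of $(\mathbb{B}^n,\NECC)$, regarded as an assignment of pairwise distinct $k$-bit codewords to the $\chi$ colors (possible since $2^k\ge\chi$). I take $h=id$, so $V'=\co0{n+k}$ and $\varphi_h(z)=z_V$, and I let the $k$ new automata form $B=\{n,\dots,n+k-1\}$. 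For the update schedule I set $W'=(B,W_0,\dots,W_{p-1})$ when $k\ge 1$, and $W'=W$ when $k=0$. Since $W'(i)=W(i)+1$ for every $i\in V$, the order of $W$ is preserved, so $W'\in\mathscr{E}_{id}(W,V')$.

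Next I would define $F':\mathbb{B}^{n+k}\to\mathbb{B}^{n+k}$. The new automata merely record the color of the original part: $f'_i(z):=c(z_V)_{i-n}$ for $i\in B$. For the original automata, the idea is that once $B$ holds $c(z_V)$, the simultaneous partial update $F_{W_{<j}}(z_V)$ together with $c(z_V)$ determines the full image $F(z_V)$. So for each $j\in\co0p$ I introduce $g_j:\mathbb{B}^n\times\mathbb{B}^k\to\mathbb{B}^n$ defined by $g_j(y,a):=F(x)$ for any $x\in\mathbb{B}^n$ with $F_{W_{<j}}(x)=y$ and $c(x)=a$ (arbitrary if no such $x$ exists), and I set $f'_i(z):=g_{W(i)}(z_V,z_B)_i$ for $i\in V$.

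The only place the coloring hypothesis enters — the \emph{key step} — is checking that $g_j$ is well defined. If $x,x'$ both satisfy $F_{W_{<j}}(x)=F_{W_{<j}}(x')=y$, then $j\in\ICC_{F,W}(x,x')$, hence $(x,x')\in\CC_{F,W}$; if moreover $c(x)=c(x')=a$, then $(x,x')$ is not an edge of $\GNECC_{F,W}$, and since it is confusable it cannot lie in $\NEC_F$, i.e.\ $F(x)=F(x')$. Thus all admissible witnesses $x$ yield the same value, $g_j$ is a genuine function, and in particular $g_j(F_{W_{<j}}(x),c(x))=F(x)$ for every $x$.

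Finally I would verify $(F',W')\vartriangleright^{id}(F,[V])$, i.e.\ $F'^{W'}(z)_V=F(z_V)$ for all $z\in\mathbb{B}^{n+k}$. Applying $F'_B$ to $z$ gives $(z_V,c(z_V))$, and then a short induction on $j$ shows that after applying $F'_{W_{j-1}}\circ\cdots\circ F'_{W_0}$ to it the original part equals $F_{W_{<j}}(z_V)$ while $B$ still holds $c(z_V)$; the inductive step uses that updating $W_j$ sets each $i\in W_j$ to $g_j(F_{W_{<j}}(z_V),c(z_V))_i=F(z_V)_i$, together with the fact that $F_{W_{<j+1}}$ is the simultaneous $W_{<j+1}$-update, so replacing the $W_j$-coordinates of $F_{W_{<j}}(z_V)$ by $F(z_V)_{W_j}$ gives exactly $F_{W_{<j+1}}(z_V)$. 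For $j=p$ this yields $F'^{W'}(z)_V=F_{W_{<p}}(z_V)=F(z_V)$, so $(F',W')$ simulates $(F,[V])$ with $k$ additional automata and $\kappa(F,W)\le k$. I expect the main obstacle to be purely bookkeeping: keeping straight that what the local rules of $F'$ see at stage $j$ is the simultaneous partial update $F_{W_{<j}}$ — precisely the object confusability, hence the coloring, is designed to control — while everything else is routine.
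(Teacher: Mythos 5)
Your proof is correct and follows essentially the same route as the paper: the new automata record the color of the initial configuration, and the key well-definedness check for $g_j$ is exactly the paper's argument that the set $A_j(x||y)=\{F(x')\mid \COLOR(x')=y,\ F_{W_{<j}}(x')=x\}$ is a singleton or empty. The only (immaterial) difference is that you update the $k$ extra automata as one initial block rather than as $k$ sequential singletons.
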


\begin{proof}
Let $k = \lceil \log_2(\chi(\GNECC)) \rceil $.
 We define $W'$ such that we start by updating sequentially the last $k$ nodes, and after this, we update as $W$: $W' = (\{n\}, \{n+1\}, \dots,\{n+k-1\}, W_0, W_1, \dots , W_ {p-1})$.
 Let $\mathsf{color}: \mathbb{B}^n \to \mathbb{N}$ be a minimum coloring of the $\GNECC$ graph. For all $x \in \mathbb{B}^n$, let $\COLOR(x) $ be the number $\mathsf{color}(x)$ encoded with a Boolean vector of size $k$.
 It is possible to encode it with $k$ Boolean numbers because with $k$ bits we can encode $2^k\ge\chi(\GNECC)=|\mathsf{color}(\mathbb B^n)|$ values.
Let $x \in \mathbb{B}^n$ and $y \in \mathbb{B}^k$. 
We define $z=x||y\in\mathbb B^{n+k}$ by $z_{\co0n}=x$ and $z_{\co n{n+k}}=y$.
 For all $j \in \cc0p$, let $A_j(x||y) = \{F(x')\ |\ x' \in \mathbb{B}^n \text{ and } \COLOR(x') = y \text{ and } F_{W_{<j}}(x') = x\}$. We can prove that $|A_j(x||y)| \leq 1$.
 For the sake of contradiction, suppose $\exists F(x'), F(x'') \in A_j(x||y),\ F(x') \neq F(x'')$. Clearly, $(x',x'') \in \NEC$.
 Moreover, $F_{W_{<j}}(x') = x = F_{W_{<j}}(x'')$ gives that $(x',x'') \in \ICC$. So $(x',x'') \in \NECC$.
 However, $\COLOR(x') = y = \COLOR(x'')$, which contradicts the construction of the coloring. 
Let $F': \mathbb{B}^{n+k} \to \mathbb{B}^{n+k}$ be defined for all $x||y\in \mathbb{B}^{n+k}$ by $F'_{\co n{n+k}}(x||y) = \COLOR(x)$ and $\forall j \in \co0p,\ F'(x||y)_{{W'}_{k+j}} = z_{W_j}$ if $A_j(x||y) = \{z\}$, and $[0]^{|W_j|}$ if $A_j(x||y)$ is empty.
 Now, let $z=x||y \in \mathbb{B}^{n+k}$ and we show that $F'^{W'}(x||y)_{\co0n} = F(x)$.
Let us show by induction that $\forall j\in \cc0p,\ F'^{{W'}^{<k+j}}(z)_{\co0n} = F_{W_{<j}}(x) $.
Let $j=0$. We have $F'^{{W'}^{<k+j}}(z)_{\co0n} = F'^{{W'}^{<k}}(z)_{\co0n} = x$ (because in the first $k$ steps of $W'$ we only update the automata of $\co n{n+k}$) and $F_{W_{<j}}(x) = F_{W_{<0}}(x) = x$.
 So $F'^{{W'}^{k+j}]}(z)_{\co0n} = F_{W_{<j}}(x)$.
 Now let $j \in \cc0p$, $z'=F'^{{W'}^{<k+j}}(z)$, and assume that $z'_{\co0n} = F_{W_{<j}}(x)$.
 We have $F'^{{W'}^{<k+j+1}}(z)_{\co0n} = F'_{{W'}^{k+j+1}}(z')$. Thus, $F'^{{W'}^{<k+j}}(z)_{\co0n \setminus {W'}_{k+j+1}} = z'_{\co0n \setminus {W'}_{k+j+1}} = F_{W_{<j}}(x)_{\co0n \setminus {W}_{j+1}} = F_{W_{<j+1}}(x)_{\co0n \setminus {W}_{j+1}}$.
 Furthermore, $\COLOR(x) = F(z)_{\co n{n+k}} = z'_{\co n{n+k}}$, and by induction hypothesis, $F_{W_{<j}}(x) = z'_{\co0n}$.
 Thus, $F(x) \in A_j(z')$. As a consequence, $F'^{{W'}^{<k+j+1}}(z)_{{W'}_{k+j+1}}=F'_{{W'}^{k+j+1}}(z')_{{W'}_{k+j+1}}$ was defined as $F(x)_{{W'}_{k+j+1}} = F(x)_{{W}_{j+1}}$. As a result, $F'^{{W'}^{<k+j+1}}(z)_{\co0n} = F_{W_{<j+1}}(x)$.
 Consequently, $\forall z = x||y \in \mathbb{B}^{n+k},\ F'^{W'}(z)_{\co0n} = F(x)$. Thus, $(F',W') \vartriangleright^{id} (F, [V] )$. Finally, $\kappa(F,W) \leq \lceil \log_2(\chi(\GNECC)) \rceil$.
\end{proof}

Lemma~\ref{lem1} and Lemma~\ref{lem2} show that there is an equivalence between a coloring of the $\GNECC_{F,W}$ graph and a SBAN $(F',W')$ which simulates $(F,[V])$. 
Moreover, we can see in Lemma~\ref{lem2} that one optimal simulation is always achieved by applying sequentially the additional automata before applying the constrained schedule.

\begin{theorem} 
\label{theorem_chro}
For any BAN $F:\mathbb{B}^{n} \to \mathbb{B}^{n}$ and any block-sequential update schedule $W$, $\kappa(F,W) = \lceil \log_2(\chi(\GNECC_{F,W})) \rceil$.
\end{theorem}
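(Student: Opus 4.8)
The statement is exactly the conjunction of the two lemmas already proved: Lemma~\ref{lem1} gives $\kappa(F,W) \geq \lceil \log_2(\chi(\GNECC_{F,W})) \rceil$ and Lemma~\ref{lem2} gives $\kappa(F,W) \leq \lceil \log_2(\chi(\GNECC_{F,W})) \rceil$. So the plan is simply to invoke both inequalities and conclude by antisymmetry of $\leq$ on the integers. No new construction or estimate is needed.

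The only point worth a moment's care is that both lemmas are stated with the same right-hand side $\lceil \log_2(\chi(\GNECC_{F,W})) \rceil$, so the two bounds genuinely pinch $\kappa(F,W)$ to a single integer value; there is no gap coming from the rounding. (Internally, Lemma~\ref{lem1}'s argument first yields $k \geq \lceil \log_2(\chi(\GNECC)) \rceil$ from an exact counting argument on the number of color values realizable by the $k$ added automata, and only then weakens it; Lemma~\ref{lem2} builds an explicit simulator with exactly $\lceil \log_2(\chi(\GNECC)) \rceil$ added automata. Since $\kappa$ is defined as the minimum such $k$, equality follows.)

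I do not expect any obstacle here: the substance of the result lies entirely in the two lemmas, in particular in the coloring-to-simulator direction of Lemma~\ref{lem2}, where one must check that the added automata can store the coloring, recover the pre-image information block by block, and that the sets $A_j$ are singletons precisely because the coloring is valid on $\GNECC$. Given those, Theorem~\ref{theorem_chro} is a one-line corollary, and I would present it as such.
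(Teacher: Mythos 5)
Your proposal is correct and matches the paper exactly: Theorem~\ref{theorem_chro} is presented there as an immediate consequence of the matching bounds in Lemma~\ref{lem1} and Lemma~\ref{lem2}, with no separate argument given. Nothing further is needed.
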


In Lemma~\ref{lem_kp_lower_bound} below, using the example of $n/2$ automata which exchange their values, we find a lower bound for $\kappa_n$. We use the fact that if we take the good update schedule $W$, this $\GNECC_{F,W} $ graph has a big clique number.

\begin{lem} 
\label{lem_kp_lower_bound}
$\forall n \in \mathbb{N},\ \kappa_n \geq \lfloor n/2 \rfloor$.
\end{lem}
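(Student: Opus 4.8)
The plan is to exhibit, for each $n$, a specific BAN $F$ on $V=\co0n$ and a block-sequential update schedule $W$ such that the $\GNECC_{F,W}$ graph contains a clique of size $2^{\lfloor n/2\rfloor}$; by Theorem~\ref{theorem_chro} this forces $\kappa(F,W)=\lceil\log_2\chi(\GNECC_{F,W})\rceil\ge\lceil\log_2\omega(\GNECC_{F,W})\rceil\ge\lfloor n/2\rfloor$, hence $\kappa_n\ge\lfloor n/2\rfloor$. The natural candidate, suggested by the size-$2$ example in the introduction, is the ``swap'' network: write $m=\lfloor n/2\rfloor$, pair up automata $2t$ and $2t+1$ for $t\in\co0m$ (ignoring the last automaton if $n$ is odd), and let $F$ exchange the values of each pair, i.e. $f_{2t}(x)=x_{2t+1}$ and $f_{2t+1}(x)=x_{2t}$ (and $f_{n-1}(x)=x_{n-1}$ if $n$ is odd). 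For $W$, I would take the sequential schedule that updates all the ``left'' automata $\{0,2,4,\dots\}$ first (in any order, each as its own block) and then all the ``right'' automata $\{1,3,5,\dots\}$; the key point is that after updating $W_{<m}=\{0,2,\dots,2m-2\}$ every left automaton $2t$ has been overwritten with the old value of $x_{2t+1}$.

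The heart of the argument is to identify a large clique. Consider the set $S$ of all configurations $x\in\mathbb B^n$ with $x_{2t}=x_{2t+1}$ for every $t\in\co0m$ (and, if $n$ odd, $x_{n-1}=0$, say); so $|S|=2^m$. I claim $S$ is a clique of $\GNECC_{F,W}$. First, any two distinct $x,x'\in S$ are confusable: since on $S$ each pair already carries equal left/right values, applying $F_{W_{<m}}$ (update the left automata) turns a configuration $x\in S$ into the all-left-update image which records the common pair value in both coordinates — more carefully, one checks that $F_{W_{<i}}(x)$ for a suitable $i$ collapses $x$ and $x'$, because the swap dynamics on a ``balanced'' pair is an involution-like move whose partial updates lose one bit per pair. (The clean statement I would verify: there is an index $i$, the same for all of $S$, with $F_{W_{<i}}(x)=F_{W_{<i}}(x')$ for all $x,x'\in S$ — e.g.\ $i=m$ after the left automata are updated, since then coordinates $\{0,2,\dots\}$ all hold values equal to the corresponding right coordinate, and because $x\in S$ this already equalled the left coordinate, so $F_{W_{<m}}$ acts as identity on $S$; then continuing, updating the right automata collapses things — I need to pick $i$ so the collapse is genuine.) Second, distinct $x,x'\in S$ are non-equivalent: $F(x)$ simply swaps each pair, and since on $S$ the pairs are constant, $F(x)=x\ne x'=F(x')$, so $(x,x')\in\NEC$. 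Hence $S\subseteq\CC\cap\NEC=\NECC$ as a clique.

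The main obstacle I anticipate is getting the confusability index $i$ exactly right: I need a single $i\in\cc0p$ that works for all pairs in $S$ simultaneously, and the swap network is almost too symmetric — naively, $F(x)=x$ on $S$, so I must make sure the partial update $F_{W_{<i}}$ genuinely forgets information rather than acting as the identity. This is why the ordering in $W$ matters: by updating a proper nonempty subset of the coordinates of each pair (the left ones) before the rest, $F_{W_{<m}}(x)$ replaces $x_{2t}$ by $x_{2t+1}$; on $S$ this is the identity, which is not yet a collapse, so I instead want an intermediate step in which, say, one coordinate of each pair has been updated and equals the other while the configurations still differ — meaning the right choice is to take $W$ so that the ``overwrite'' step destroys the distinguishing bits. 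Concretely I would instead let $W$ update, for each $t$, automaton $2t$ with local rule reading $x_{2t+1}$, but design $F$ as the swap so that after this block all configurations in the ``anti-diagonal'' set $S'=\{x: x_{2t+1}=\text{fixed}\}$ coincide; re-choosing $S$ and $W$ together until one genuine collapsing index $i$ exists for a set of size $2^m$ is the one technical knot, and checking $\NEC$ on that same set (which is automatic once the set is an anti-chain under $F$) closes the proof. Once $S$ with $|S|=2^{\lfloor n/2\rfloor}$ is a clique, $\chi\ge\omega\ge 2^{\lfloor n/2\rfloor}$ and Theorem~\ref{theorem_chro} finishes it.
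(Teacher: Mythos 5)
Your final construction is essentially the paper's: a pairwise-swap network, a sequential schedule updating one member of each pair before the other, and a clique of size $2^{\lfloor n/2\rfloor}$ obtained by fixing the coordinates that are read first (the paper pairs $i$ with $i+n/2$ and takes $\{x \ |\ x_{\co{n/2}n}=[0]^{n/2}\}$, which is your ``anti-diagonal'' set up to relabelling). The one real problem is that your first candidate clique, the diagonal set $S=\{x \ |\ x_{2t}=x_{2t+1}\text{ for all }t\}$, does not work at all: on $S$ every partial update $F_{W_{<i}}$ acts as the identity (each overwrite replaces a coordinate by an equal value), so no two distinct elements of $S$ ever become confusable and $S$ spans no edges of the $\GNECC$ graph -- the hoped-for ``collapse'' never happens at any step $i$. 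You correctly diagnose this in your closing paragraph, but the proof should simply commit to $S'=\{x \ |\ x_{2t+1}=0\text{ for all }t\}$ from the outset: after the block of left updates every $x_{2t}$ is overwritten by $x_{2t+1}=0$, so all of $S'$ collapses to the all-zero configuration (confusability), while $F(x)_{2t+1}=x_{2t}$ keeps distinct elements of $S'$ non-equivalent. With that substitution the argument is complete and coincides with the paper's.
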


\begin{proof}
Let us suppose that $n$ is even (if not, we just have to add a useless automaton and the proof remains valid). Let us consider the BAN $F$ such that: 
\begin{equation*}
	\forall i \in \co0{n/2},\ f_i(x) = x_{i+n/2}\quad \text{and}\quad 	
	\forall i \in \co{n/2}n,\ f_i(x) = x_{i-n/2}\text{.}
\end{equation*}
We also consider the simple sequential update schedule $W = (\{0\}, \dots, \{n\})$. Let $X = \{ x \in \mathbb{B}^n\ |\ x_{\co{n/2}n} = [0]^{n/2} \}$, and $x, x' \in X$ such that $x \neq x'$. When we update the first half of the automata, $x$ and $x'$ both become the configuration full of $0$. Then, for $i = n/2$, we have $F_{W_{<i}}(x) = [0]^{n} = F_{W_{<i}}(x')$. Thus, $(x,x') \in \CC$. We also have $x \neq x'$. So $\exists i \in \co{n/2}n$ such that $x_i \neq x'_i$ and $f_{i+n/2}(x) = x_i$ and $f_{i+n/2}(x') = x'_i$. Consequently, $f_{i+n/2}(x) \neq f_{i+n/2}(x')$. Then, $F(x) \neq F(x')$ and $(x,x') \in \NEC$. As a result, we have $(x,x') \in \NECC$. We know that $X$ is a clique. Moreover, $X$ is a clique of size $2^{n/2}$. Thus, the chromatic number of the $\GNECC$ graph is at least $2^{n/2}$ and $\kappa(F,W) \geq n/2$. Hence, $\forall n \in \mathbb{N},\ \kappa_n  \geq n/2$.
\end{proof}

We conjecture that $\lfloor n/2 \rfloor$ is the upper bound as well. This conjecture has not been proven yet, but Theorem~\ref{th_clique} supports it by giving an upper bound to the clique number of a $\GNECC$ graph.

\begin{conjecture} 
\label{conjecture1}
$\forall n \in \mathbb{N},\ \kappa_n \leq \lfloor n/2 \rfloor $.
\end{conjecture}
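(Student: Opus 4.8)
By Theorem~\ref{theorem_chro}, Conjecture~\ref{conjecture1} is equivalent to the purely combinatorial statement that $\chi(\GNECC_{F,W})\le 2^{\lfloor n/2\rfloor}$ for every BAN $F:\mathbb B^n\to\mathbb B^n$ and every block-sequential $W$. The first move is to reduce to the sequential schedule: if $\widetilde W$ is a sequential refinement of $W$ respecting its order, then the chain of prefixes $(W_{<j})_j$ is a subchain of $(\widetilde W_{<i})_i$, hence $\CC_{F,W}\subseteq\CC_{F,\widetilde W}$, so $\GNECC_{F,W}$ is a subgraph of $\GNECC_{F,\widetilde W}$ and $\chi(\GNECC_{F,W})\le\chi(\GNECC_{F,\widetilde W})$; after relabelling automata we may assume $W=(\{0\},\dots,\{n-1\})$ (replacing $F$ by a conjugate, which is harmless as $F$ ranges over everything). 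For this $W$ one has $F_{W_{<i}}(x)=(f_0(x),\dots,f_{i-1}(x),x_i,\dots,x_{n-1})$, and a short computation yields the clean description
\[
(x,x')\in\NECC_{F,W}\iff F(x)\neq F(x')\ \text{ and }\ \max\{j\mid x_j\neq x'_j\}<\min\{j\mid F(x)_j\neq F(x')_j\};
\]
i.e.\ $x$ and $x'$ are adjacent exactly when their last differing input coordinate lies strictly below their first differing output coordinate. The goal becomes to colour $\mathbb B^n$ with $2^{\lfloor n/2\rfloor}$ colours with no monochromatic such pair.

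Two observations orient the attack. First, any edge forces $x,x'$ to agree on input coordinate $n-1$ and to have images agreeing on output coordinate $0$; hence $\GNECC_{F,W}$ is a disjoint union of (at most) four induced subgraphs indexed by $(x_{n-1},F(x)_0)\in\mathbb B^2$, and its chromatic number is the maximum over them, which suggests peeling one input coordinate from the right and one output coordinate from the left in an inductive argument. Second, Theorem~\ref{th_clique} already gives $\omega(\GNECC_{F,W})\le 2^{\lfloor n/2\rfloor}$ and Lemma~\ref{lem_kp_lower_bound} shows this is tight, so the conjecture asserts exactly that for these graphs the chromatic number does not exceed this clique bound. I would try to build the colouring through the simulation of Lemma~\ref{lem2}: with $k=\lfloor n/2\rfloor$ extra automata updated first, at the $j$-th step of $W$ the working configuration is $(F(x)_{<j},x_{\ge j})$ together with the contents of those $k$ automata, and we must recover $f_j(x)$, so the extra automata have to offset precisely the overwritten coordinates $x_0,\dots,x_{j-1}$. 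The plan is to group coordinates into $\lfloor n/2\rfloor$ consecutive pairs and let each extra automaton hold one Boolean ``check value'' built from both an input and an output coordinate of its pair — of the form $x_a\oplus F(x)_b$, or a subtler, possibly non-linear, $F$-dependent combination — so that the same automaton restores a lost input bit when it feeds some $f_j$ and certifies an output bit, i.e.\ does double duty. The pairing and the check functions must be chosen as a function of $F$: a fixed, $F$-independent scheme is easily defeated — already on $n=4$ with $F(x)=(0,0,0,x_1)$, where $\chi=2$ while $(0,0,0,0)$ and $(0,1,0,0)$ are adjacent and the naive scheme $c_t(x)=x_{2t}\oplus F(x)_{2t}$ gives them the same colour — so the construction should be produced recursively, maintaining the invariant that $\lfloor n/2\rfloor$ check bits still suffice after the coordinate-peeling above.

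The hard part is the passage from $\omega$ to $\chi$: for general graphs this gap is unbounded, so Theorem~\ref{th_clique} gives nothing by itself and one must exploit the shape of $\GNECC$ graphs. Proving them perfect is tempting — the condition ``last differing input coordinate $<$ first differing output coordinate'' has a comparability flavour — but the dependence on $F$ destroys transitivity, and it is unclear perfectness even holds (one odd hole would kill this route). The explicit-colouring line is more robust, yet its difficulty is real: a check bit $x_a\oplus F(x)_b$ separates a NECC pair only when exactly one of ``$a$ differs on the inputs'' and ``$b$ differs on the outputs'' occurs, and making this happen for every NECC pair at once — with only $\lfloor n/2\rfloor$ bits and for all split points $i$ simultaneously — is where the pairing, the recursion, and the ($F$-tailored, perhaps non-linear) choice of check functions all have to cooperate. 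I expect this simultaneous control over every split point to be the main obstacle.
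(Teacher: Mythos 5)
This statement is labelled a \emph{conjecture} in the paper, and the paper explicitly does not prove it: the best it establishes is the lower bound $\kappa_n\ge\lfloor n/2\rfloor$ (Lemma~\ref{lem_kp_lower_bound}), the upper bound $\kappa_n\le 2n/3+2$ (Theorem~\ref{thm2}), the clique bound $\omega(\GNECC)\le 2^{\lfloor n/2\rfloor}$ (Theorem~\ref{th_clique}), and the bound $n/2+1$ for bijective $F$. Your proposal likewise does not prove the statement, and you are upfront about this, so the verdict is that there is a genuine gap --- indeed the gap is the entire core of the argument. What you do establish is sound and consistent with the paper: the equivalence with $\chi(\GNECC_{F,W})\le 2^{\lfloor n/2\rfloor}$ via Theorem~\ref{theorem_chro}; the reduction to the simple sequential schedule (this is exactly the content of the two remarks in Section~\ref{section_inecc}, via edge-set inclusion); and the clean characterization of $\NECC$ edges as $\max\{j\mid x_j\neq x'_j\}<\min\{j\mid F(x)_j\neq F(x')_j\}$, which is a correct and useful reformulation (it follows from $F_{W_{<i}}(x)=(f_0(x),\dots,f_{i-1}(x),x_i,\dots,x_{n-1})$ and is essentially Lemma~\ref{lemsteps} made explicit). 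The observation that every edge forces agreement on $x_{n-1}$ and on $F(x)_0$, hence a four-way disjoint decomposition, is also correct, though peeling a coordinate does not obviously yield a smaller instance of the same problem (the induced subgraph on $\{x\mid x_{n-1}=a,\,F(x)_0=b\}$ is not the $\GNECC$ graph of any BAN on $n-2$ automata in an evident way), so the proposed induction has no base to stand on yet.

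The missing idea is precisely the one the paper is also missing: a construction of a valid colouring with $2^{\lfloor n/2\rfloor}$ colours, or equivalently a way to pass from the clique bound of Theorem~\ref{th_clique} to a chromatic bound. Your ``check bit'' scheme $c_t(x)=x_{a_t}\oplus F(x)_{b_t}$ is never specified beyond the observation that any $F$-independent choice fails (your $n=4$ counterexample is correct), and no recursive invariant is actually formulated or verified; the perfectness route is raised and abandoned without a candidate proof or refutation. So while the proposal is a reasonable research programme that correctly identifies where the difficulty lies --- simultaneous control over all split points $i$ with only $\lfloor n/2\rfloor$ bits --- it should not be mistaken for a proof: every step that would distinguish the conjecture from the already-known bounds $[\,n/2,\,2n/3+2\,]$ remains to be supplied.
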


\section{$\GINECC$ graph} \label{section_inecc}

In this section, we define the $\GINECC$ graph which is the $\GNECC$ graph after we quotient its configurations which have the same image. We can prove that the $\GINECC$ graph has a bigger chromatic number than the $\GNECC$ graph, find an upper bound of its chromatic number and deduce an upper bound for the $\GNECC$ graph as well.

\begin{definition}
The \emph{$\GINECC$ graph} is the graph such that:
\begin{itemize}
\item the vertex set is $\{F(x)\ |\ x \in \mathbb{B}^n \}$, \ie the set of the images of the configurations of the $\GNECC$ graph;
\item two vertices $y$ and $y'$ are connected to each other if $\exists x, x' \in \mathbb{B}^n$ such that $F(x) = y, F(x') = y'$ and $(x,x') \in \NECC$.
\end{itemize}
\end{definition}

Let us now prove that we can use a valid coloring of the $\GINECC$ graph to color the $\GNECC$ graph.

\begin{lem} 
\label{lem_inecc1}
$\chi(\GINECC) \geq \chi(\GNECC)$.
\end{lem}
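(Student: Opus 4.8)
The plan is to show that any valid coloring of the $\GINECC$ graph pulls back, along the map $x \mapsto F(x)$, to a valid coloring of the $\GNECC$ graph using no more colors. Formally, let $\mathsf{c}$ be a valid coloring of the $\GINECC$ graph, and define a coloring $\mathsf{c}'$ of $\GNECC$ by $\mathsf{c}'(x) = \mathsf{c}(F(x))$. This is well-defined because $F(x)$ is a vertex of the $\GINECC$ graph for every $x \in \mathbb B^n$, by definition of its vertex set.

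The key step is to verify that $\mathsf{c}'$ is valid, i.e.\ that it gives distinct colors to the endpoints of every edge of the $\GNECC$ graph. So take an edge $(x,x') \in \NECC$. Since $\NECC \subseteq \NEC$, we have $F(x) \neq F(x')$, so $F(x)$ and $F(x')$ are two distinct vertices of the $\GINECC$ graph. Moreover, the very witnesses $x, x'$ show that $F(x)$ and $F(x')$ satisfy the adjacency condition in the $\GINECC$ graph (taking the existential witnesses to be $x$ and $x'$ themselves). Hence $\{F(x), F(x')\}$ is an edge of the $\GINECC$ graph, so $\mathsf{c}(F(x)) \neq \mathsf{c}(F(x'))$, i.e.\ $\mathsf{c}'(x) \neq \mathsf{c}'(x')$. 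Thus $\mathsf{c}'$ is a valid coloring of $\GNECC$.

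Finally, $\mathsf{c}'$ uses a subset of the colors used by $\mathsf{c}$, so the number of colors of $\mathsf{c}'$ is at most that of $\mathsf{c}$. Taking $\mathsf{c}$ to be a minimum coloring of the $\GINECC$ graph gives a valid coloring of $\GNECC$ with $\chi(\GINECC)$ colors, whence $\chi(\GNECC) \leq \chi(\GINECC)$.

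I do not expect a serious obstacle here: the argument is a routine "pull back a coloring along a graph homomorphism" once one observes that $x \mapsto F(x)$ is precisely a graph homomorphism from $(\mathbb B^n, \NECC)$ to the $\GINECC$ graph — the fact that $\NECC \subseteq \NEC$ is exactly what guarantees that adjacent vertices do not collapse to the same vertex (so that we genuinely get an edge, not a loop, on the target side). The only point requiring a line of care is checking well-definedness of $\mathsf{c}'$ and the edge condition, both of which are immediate from the definitions given in the excerpt.
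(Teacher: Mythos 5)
Your proof is correct and is essentially the paper's argument: the paper also colors each configuration $x$ by the color assigned to its image $F(x)$ (phrased via the partition into equivalence classes $E_i$), and checks validity by the same two observations that $\NECC\subseteq\NEC$ prevents adjacent vertices from having equal images and that any $\NECC$ edge witnesses an edge in the $\GINECC$ graph. Your homomorphism phrasing is just a cleaner packaging of the same idea.
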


\begin{proof}
We partition the configurations into sets of equivalent configurations (\ie configurations which have the same image) $E_1, E_2, \dots, E_k$. We denote by $y^i \in \mathbb{B}^n$ the image of the configurations of $E_i$ for each $i \in \co0k$. In other words, $\forall i \in \co0k, \forall x \in E_i,\ F(x) = y^i$. Let $\mathsf{color}: \co0k \to \mathbb{N}^*$ be an optimal coloring of the $\GINECC$ graph. In the $\GNECC$ graph, we can color all the configurations of a set $E_i$ by the color of $y^i$ in the $\GINECC$ graph. Let $x,x' \in \mathbb{B}^n$. If $x$ and $x'$ have the same color:
\begin{itemize}
\item either $x$ and $x'$ are in the same set $E_i$, and then $(x,x') \notin{} \NECC$ 
	because they are equivalent;
\item or they are in two distinct sets $E_i$ and $E_{i'}$. In this case $(x,x') \notin{} 
	\NECC$ otherwise $y^i$ and $y^{i'}$ would be connected in the $\GINECC$ graph and they 
	would have different colors.
\end{itemize}
So, the coloring is a valid coloring and does not need more colors than the $\GINECC$ graph coloring and we conclude that $\chi(\GINECC) \geq \chi(\GNECC)$.
\end{proof}

\begin{remark}
We can see that if we take two SBANs $(F,W)$ and $(F,W')$ with $W'$ a sequentialized version of $W$ (\ie an update schedule that breaks the blocks of $W$ into blocks of size $1$), the chromatic number of the $\GNECC$ graph of $(F,W)$ is always greater than or equal to that of the $\GNECC$ graph of $(F,W')$. Indeed, the set of edges of the $\GNECC$ graph of $(F,W)$ is included in the set of edges of the $\GNECC$ graph of $(F,W')$. Thus, the chromatic number of the latter is greater. Furthermore, the same reasoning applies to the $\GINECC$ graph. \emph{As a result, if we want to find an upper bound to the chromatic number of the $\GNECC$ or $\GINECC$ graph, we can restrict our study to SBAN updated sequentially.}
\end{remark}

\begin{remark}
We can see that if we have a SBAN $(F,W)$, with $W$ a sequential update schedule, we can find another SBAN $(F',W')$ with $W'$ the simple sequential update schedule $(\{0\}, \{1\}, \cdots, \{n-1\})$ which will have the same $\GNECC$ and $\GINECC$ graphs up to a permutation. As a consequence, their chromatic numbers of their $\GNECC$ and $\GINECC$ graphs are equal, respectively.
Thus, if we want to find an upper bound to the chromatic number of the $\GNECC$ or $\GINECC$ graph, we can restrict our study to the SBAN with the simple sequential update schedule $(\{0\}, \{1\}, \cdots, \{n-1\})$.
\end{remark}

Let us find now an upper bound for the chromatic number of the $\GINECC$ graph, by defining a coloring method of the graph based on a greedy algorithm. 

\begin{lem} 
\label{lem_inecc2}
 	$\chi(\INECC) \leq 2^{2n/3+2}$.
\end{lem}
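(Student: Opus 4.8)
The plan is to bound the chromatic number by a greedy coloring argument, exploiting the observation (stated in the remarks just above) that we may assume $W$ is the simple sequential update schedule $(\{0\},\{1\},\dots,\{n-1\})$. Under this assumption, $F_{W_{<i}}$ updates exactly the automata $0,1,\dots,i-1$. The key structural fact I would isolate first is a constraint on when two configurations $x,x'$ can be $\NECC$-adjacent: if $(x,x')\in\CC$, there is some $i$ with $F_{W_{<i}}(x)=F_{W_{<i}}(x')$, which forces $x$ and $x'$ to agree on all coordinates $j\ge i$ (the un-updated ones), and forces $f_0,\dots,f_{i-1}$ to map $x$ and $x'$ to the same partial image. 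In particular the images $F(x)$ and $F(x')$ agree on coordinates $0,\dots,i-1$, while the requirement $F(x)\ne F(x')$ (non-equivalence) then forces them to differ on some coordinate $\ge i$. So an $\INECC$-edge between $y=F(x)$ and $y'=F(x')$ comes with an index $i$ such that $y,y'$ agree below $i$ and differ somewhere at or above $i$, and moreover the common ``prefix update'' at step $i$ constrains things further.

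Next I would split the coordinate set $V=\co0n$ into three consecutive blocks of size roughly $n/3$: a ``low'' block $L$, a ``middle'' block $M$, a ``high'' block $H$. The idea of the greedy coloring is to assign to each vertex $y$ of the $\GINECC$ graph a color that records (a) a bounded amount of information about $y$ restricted to one of these blocks, together with (b) which block the ``splitting index'' $i$ of any potential edge falls into. Concretely, I expect the coloring to be: process the vertices in a suitable order, and for each vertex $y$ choose as its color the pair consisting of the projection $y_B$ onto the block $B\in\{L,M,H\}$ containing the relevant threshold, plus one or two extra bits to disambiguate. Since each block has size at most $\lceil n/3\rceil$, the projection onto a block carries at most $2^{\lceil n/3\rceil}$ values; ranging over the three choices of block and a constant number of auxiliary bits gives at most $3\cdot 2^{\lceil n/3\rceil}\cdot\text{const}\le 2^{n/3+O(1)}$ colors, and the bound $2^{2n/3+2}$ is comfortably of this shape (indeed the exponent $2n/3$ suggests the color actually records projections onto \emph{two} of the three blocks, i.e.\ everything outside one block of size $\ge n/3$, which is cleaner: a color is $y_{\compl{B}}$ for the block $B$ of size $\ge n/3$ containing the differing coordinate, giving $\le 3\cdot 2^{2n/3}\le 2^{2n/3+2}$ colors). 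I would then verify validity: if $y\ne y'$ are $\INECC$-adjacent with splitting index $i$, then picking $B$ to be whichever of $L,M,H$ contains a coordinate $\ge i$ on which $y$ and $y'$ differ, the colors of $y$ and $y'$ either use different block-labels, or use the same block $B$ but then $y_{\compl B}$ and $y'_{\compl B}$ already differ (since $y,y'$ agree below $i$ but the coordinate witnessing $y\ne y'$ lies in $B$, so actually one must check the difference can be forced outside $B$ — this is where the three-block pigeonhole and the extra bits come in).

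The main obstacle I anticipate is exactly this validity check: a priori two adjacent vertices $y,y'$ might differ \emph{only} on coordinates inside the chosen block $B$, so that $y_{\compl B}=y'_{\compl B}$ and the naive color collides. Overcoming this requires using the non-equivalence condition more carefully — i.e.\ showing that among the (at least one) coordinates where $y$ and $y'$ differ, and using the freedom in choosing which of the three size-$\ge n/3$ blocks to ``hide'', one can always route a differing coordinate \emph{out} of the hidden block, possibly after adding a constant number of disambiguating bits to the color (this is the source of the ``$+2$'' in the exponent). Concretely I would argue: let $d$ be the largest coordinate with $y_d\ne y'_d$; then $d\ge i$; choose the block-label of a vertex $y$ to depend on which third of $\co0n$ contains its own largest ``active'' coordinate in a sense compatible with this, so that adjacent vertices receiving the same label have $d$ lying outside the hidden block, forcing a color difference. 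Once this combinatorial routing is pinned down, the counting is routine and yields $\chi(\INECC)\le 2^{2n/3+2}$; combined with Lemma~\ref{lem_inecc1} and Theorem~\ref{theorem_chro} it gives the upper bound $\kappa_n\le 2n/3+2$ announced in the abstract.
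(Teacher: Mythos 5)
Your opening structural observation is correct and matches the paper's setup (reduce to the simple sequential schedule; an edge witnessed at step $i$ forces $F(x)_{\co0i}=F(x')_{\co0i}$ and $x_{\co in}=x'_{\co in}$). But the coloring you build from it has a genuine gap, and you have in fact put your finger on it yourself without closing it: the validity check fails for edges whose splitting index $i$ is small. For such an edge the constraint on the \emph{images} $y,y'$ is essentially vacuous (they need only agree on the first $i$ coordinates), while all the usable information sits in the \emph{preimages} ($x$ and $x'$ agree on all coordinates $\geq i$). Consequently no colour that is a function of a bounded family of projections of $y$ alone --- block labels and ``a constant number of disambiguating bits'' included --- can be guaranteed to separate $y$ from such a neighbour $y'$: two adjacent images may differ only inside whichever block you hide, and there is no per-vertex rule that routes the difference outside it (your ``largest differing coordinate'' $d$ is a property of the edge, not of the vertex, so it cannot determine a block label). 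The exponent $2n/3$ is not the trace of ``record two blocks out of three''; reverse-engineering the constant led you to the wrong mechanism.

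The paper's argument is of a different nature and the difference is essential: it colours the images greedily in order of decreasing $\GINECC$-degree and then \emph{counts}. The key step is that an equivalence class $E_i$ whose image has many neighbours disagreeing with it before coordinate $n-\ell_i$ (where $2^{\ell_i}\approx D(i)+1$) must itself contain many configurations, because each $x\in E_i$ can witness at most $2^{n-\ell_i+1}$ such neighbours, precisely via the preimage constraint $x_{\co jn}=x'_{\co jn}$ for $j\leq n-\ell_i$. Since the classes partition $\mathbb{B}^n$, high-degree classes are few, so the largest greedy colour $c$ is bounded both by $2^{\ell+1}$ (degree) and by $2^{2n+2-2\ell}$ (rank in the degree ordering); balancing the two at $\ell=(2n+1)/3$ gives $c\leq 2^{2n/3+2}$. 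This trade-off between image-side information (for late confusability steps) and preimage-side counting (for early ones) is exactly what a projection colouring of the images cannot capture, so the proposal as written does not establish the lemma.
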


\begin{proof}
Consider the BAN $F: \mathbb{B}^n \to \mathbb{B}^n$ and the simple sequential update schedule $W=(\{0\},\{1\},\cdots,\{n-1\})$. We partition the configurations into sets of equivalent configurations $E_1, E_2, \dots, E_k$. Let us denote by $y^i \in \mathbb{B}^n$ the images of the configurations of $E_i$ for each $ i \in \cc1k$. In other words, $ \forall i \in \cc1k, \forall x \in E_i,\ F(x) = y^i$. We denote the neighbors of the $i^{th}$ image by $N(i)$, \ie
\begin{equation*}
	N(i) = \{i'\ |\ \exists x \in E_i, x' \in E_{i'},\ (x,x') \in \NECC\}\text{.}
\end{equation*}
The degree of the $i^{th}$ image is denoted by $D(i) = |N(i)|$. We sort the images by decreasing degree so that $\forall i<i'$, $D(i) \geq D(i')$.  To choose the color of $y^i$, we apply a greedy algorithm. We use the smallest color not already used by a neighbor of $y^i$: $\mathsf{color}(y^i) = min(\mathbb{N}^* \setminus \{\mathrm{color}(y^{i'})\ |\ i' < i \text{ and } i' \in N(i)\})$.

We can see that it is a proper coloring. Let us prove that if $(y^i,y^{i'}) \in \INECC$ then $\mathsf{color}(y^i) \ne \mathsf{color}(y^{i'})$. Indeed, let $(y^i, y^{i'}) \in \INECC$.
With no loss of generality, let us say that $i' < i$. By definition of $\INECC$, $\exists (x,x') \in \NECC$ such that $F(x) = y^i$ and $F(x') = y^{i'}$.
So $ i' \in N(i)$, and by definition of $\mathsf{color}$, $ \mathsf{color}(y^i) \ne  \mathrm{color}(y^{i'})$. As a consequence, that is a proper coloring.

Now, let $c$ be the biggest color used and $k'$ the index of (one of) the images which have $c$ as color.
By construction, we have $c \leq D(E_{k'})+1$ and $c \leq k'$.
For all $i$, we note $\ell_i = \left\lfloor \log_2(D(E_i)+1) \right\rfloor$ and $\ell = \ell_{k'}$. Since $c \leq D(E_{k'})+1$, we have $c \leq 2^{\ell+1}$. Consider $M(i)=\{i'\ |\ (y^{i})_{\cc0{n-\ell_i}} = (y^{i'})_{\cc0{n-\ell_i}}\}$ and $L(i)=N(i)\setminus M(i)$.
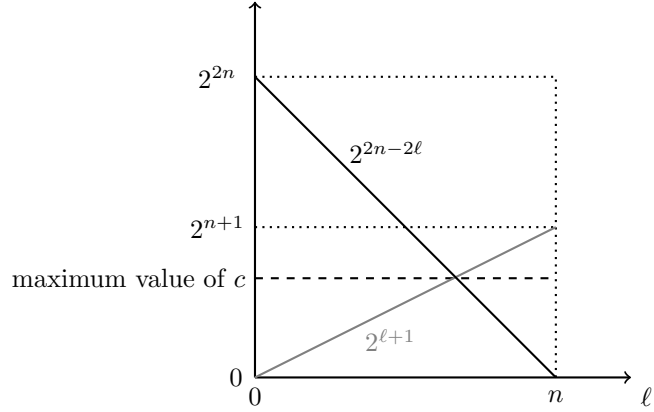
\begin{figure}[t!]
	\begin{center}
    	\begin{tikzpicture}
			\draw[thick,->] (0,0) -- (5,0) node[anchor=north west] {$\ell$};%
			\draw[thick,->] (0,0) -- (0,5) node[anchor=south east] {};%
			\node[color=black] at (1.75,3) {$2^{2n-2\ell}$};%
			\node[color=gray] at (1.8,0.5) {$2^{\ell+1}$};%
			\draw[thick,dotted] (4,0) -- (4,4) node[anchor=south east] {};%
			\node at (4,-0.25) {$n$};%
			\node at (0,-0.25) {$0$};%
			\node at (-0.25,0) {$0$};%
			\node at (-0.5,4) {$2^{2n}$};%
			\node at (-0.5,2) {$2^{n+1}$};%
			\draw[thick,dotted] (0,4) -- (4,4) node[anchor=south east] {};%
			\draw[thick,dotted] (0,2) -- (4,2) node[anchor=south east] {};%
			\draw[thick,dashed] (0,1.32) node[left] {maximum value of $c$} -- (4,1.32) 
				node[anchor=south east]{};%
			\draw[thick,color=gray] (0,0) -- (4,2) node[anchor=north west]{};%
			\draw[thick,color=black] (0,4) -- (4,0) node[anchor=north west]{};%
		\end{tikzpicture}
	\end{center}
	\caption{Upper bound for $c$.}
	\label{fig:upperc}
\end{figure}
Clearly, $|M(i)
| \leq 2^{\ell_i-1}$, and $i \in M(i)
$. So $L(i) = (N(i) \cup \{ i \}) \setminus M(i)
$.
We also know that $i \notin{} N(i)$. As a consequence, $|N(i) \cup \{ i \}| = D(E_i) +1 \geq 2^{\ell_i}$.
Thus, $|L(i)| \geq 2^{\ell_i} - 2^{\ell_i-1}=2^{\ell_i-1}$.

Moreover, $\forall x \in E_i,\ \{x' \in E_{i'}\ |\ i' \in L(i) \text{ and } (x,x') \in \NECC\} \subseteq \{x'\ |\ x_{\oo{n-\ell_i}n} = x'_{\oo{n-\ell_i}n}\}$ because such a pair $(x,x')$ should be confusable at some step $j\le n-\ell_i$.
So $\forall x \in E_i,\ |\{ x' \in E_{i'}\ |\ i' \in L(i) \text{ and } (x,x') \in \NECC \}| \leq 2^{n-\ell_i+1}$.
Putting things together, we get:
\begin{eqnarray*}
2^{\ell_i-1}&\ge&|L(i)|\\
&\ge&|\{(x,x') \in E_i\times E_{i'}\cap\NECC |\ i' \in L(i) \}|\\
&\ge&|E_i|2^{n-\ell_i+1}.
\end{eqnarray*}
We get $|E_i|\le2^{\ell_i-1}/2^{n-\ell_i+1}=2^{2\ell_i-n-2}$.

Furthermore, $\sum \limits_{i=1}^{k'} |E_i| \leq 2^n $ and $\forall i \leq k',\ |E_i| \geq 2^{2\ell_i-n-2} \geq 2^{2\ell-n-2}$.
So $k' 2^{2\ell-n-2} \leq 2^n$ and $k' \leq 2^{2n+2-2\ell}$.
Thus, $c \leq 2^{2n+2-2\ell}$.
However, we have also $c \leq 2^{\ell+1}$.
An upper bound for $c$ is reached when $2^{\ell+1} = 2^{2n+2-2\ell}$ (see Figure~\ref{fig:upperc}).
In other words, when $2^{3\ell} = 2^{2n+1} \iff 2^\ell = 2^{(2n+1)/3}$. So, we have $c \leq 2^{(2n+1)/3+1}$ and $c \leq 2^{2n/3+2}$. Furthermore, $\chi(\GINECC) \leq c$. As a result, $\chi(\GINECC) \leq 2^{2n/3+2}$.
\end{proof}

From Lemma~\ref{lem_inecc1} and Lemma~\ref{lem_inecc2}, we can deduce an upper bound for the chromatic number of a $\GNECC$ graph. Furthermore, using the relation between the chromatic number of a $\GNECC_{F,W}$ graph and $\kappa(F,W)$, we can find an upper bound for $\kappa_n$.
\begin{theorem}\label{thm2}
	$\forall n \in \mathbb{N}, \kappa_n \leq 2n/3+2$.
\end{theorem}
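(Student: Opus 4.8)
The statement is essentially a corollary assembled from the three main results already available, so the plan is to chain them together and handle the two minor reductions that let Lemma~\ref{lem_inecc2} (stated only for the simple sequential schedule) apply to an arbitrary SBAN. Fix an arbitrary $F:\mathbb{B}^n\to\mathbb{B}^n$ and $W\in\overrightarrow{\mathscr{P}}(V)$; I want to bound $\kappa(F,W)$ and then take the maximum to get $\kappa_n$. First, by Theorem~\ref{theorem_chro}, $\kappa(F,W)=\lceil\log_2(\chi(\GNECC_{F,W}))\rceil$, so it suffices to bound $\chi(\GNECC_{F,W})$ from above. By Lemma~\ref{lem_inecc1}, $\chi(\GNECC_{F,W})\le\chi(\GINECC_{F,W})$, which moves the problem to the quotiented graph.

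Next I would perform the reduction to the simple sequential update schedule, exactly as justified by the two remarks following Lemma~\ref{lem_inecc1}. Let $W'$ be any sequentialization of $W$: since the chain of prefix-unions $\{W_{<i}\}$ is contained in the corresponding chain for $W'$, we get $\CC_{F,W}\subseteq\CC_{F,W'}$ and hence $\NECC_{F,W}\subseteq\NECC_{F,W'}$, so the edge set of $\GINECC_{F,W}$ is contained in that of $\GINECC_{F,W'}$ (the vertex set $\{F(x)\}$ does not depend on the schedule), giving $\chi(\GINECC_{F,W})\le\chi(\GINECC_{F,W'})$. Finally, relabeling the automata so that $W'$ becomes the canonical simple sequential schedule $(\{0\},\dots,\{n-1\})$ yields a SBAN whose $\GINECC$ graph is isomorphic to $\GINECC_{F,W'}$, hence has the same chromatic number. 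Applying Lemma~\ref{lem_inecc2} to that SBAN gives $\chi(\GINECC_{F,W'})\le 2^{2n/3+2}$.

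Putting the chain together, $\chi(\GNECC_{F,W})\le 2^{2n/3+2}$, so $\kappa(F,W)=\lceil\log_2(\chi(\GNECC_{F,W}))\rceil\le\lceil 2n/3+2\rceil$. Since this holds for every $F$ and $W$, we conclude $\kappa_n\le 2n/3+2$ (reading the bound, as throughout the paper, as the real-valued estimate $\kappa_n\le\lceil 2n/3\rceil+2$). There is no genuine obstacle here: the only points requiring a line of care are (i) checking that the edge-inclusion/isomorphism reductions really do transfer Lemma~\ref{lem_inecc2} from the simple sequential schedule to an arbitrary block-sequential $W$, and (ii) being honest about the ceiling in $\lceil\log_2(\cdot)\rceil$ when $n$ is not a multiple of $3$.
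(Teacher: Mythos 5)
Your proposal is correct and follows essentially the same route as the paper: chain Theorem~\ref{theorem_chro} with Lemma~\ref{lem_inecc1} and Lemma~\ref{lem_inecc2} to bound $\chi(\GNECC_{F,W})$ by $2^{2n/3+2}$. You are in fact slightly more careful than the paper's own proof, which invokes Lemma~\ref{lem_inecc2} for arbitrary $W$ without restating the sequentialization/relabeling reduction from the remarks, and which ignores the ceiling when $n$ is not a multiple of $3$.
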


\begin{proof} 
Let $F: \mathbb{B}^n \to \mathbb{B}^n$ and $W \in \overrightarrow{\mathscr{P}}(V)$.
Thanks to Lemma~\ref{lem_inecc1} and Lemma~\ref{lem_inecc2}, we know that $\chi(\GNECC_{F,W}) \leq \chi(\GINECC_{F,W})$ and $\chi(\GINECC_{F,W}) \leq 2^{2n/3+2}$. As a consequence, $\chi(\GNECC_{F,W}) \leq 2^{2n/3+2}$, $\log_2(\chi(\GNECC_{F,W})) \leq 2n/3+2$ and $\kappa(F,W) \leq 2n/3+2$. Thus, we have $\forall F: \mathbb{B}^n \to \mathbb{B}^n$ and $W \in \overrightarrow{\mathscr{P}}(V)$, $\kappa(F,W) \leq 2n/3+2$, which gives by definition, $\kappa_n \leq 2n/3+2$.
\end{proof}

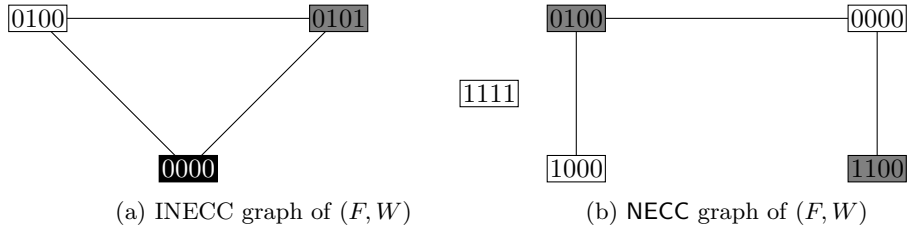
\begin{figure}[b!]
	\begin{center}
		\begin{subfigure}[t]{0.49\textwidth}
			\centerline{
		    \begin{tikzpicture}
			    \node [draw,outer sep=0,inner sep=1,minimum size=10,fill=white] (v0100) at 
			    	(0,0) {0100};
				\node [draw,outer sep=0,inner sep=1,minimum size=10,fill=gray] (v0101) at 
					(4,0) {0101};
				\node [draw,outer sep=0,inner sep=1,minimum size=10,text=white,fill=black] 
					(v0000) at (2,-2) {0000};
				\node [draw,outer sep=0,inner sep=1,minimum size=10,fill=white] (v1111) at 
					(6,-1) {1111};
				\draw [-] (v0100) -- (v0101);
				\draw [-] (v0101) -- (v0000);
				\draw [-] (v0000) -- (v0100);
			\end{tikzpicture}
			}
		    \caption{$\INECC$ graph of $(F,W)$}
			\label{figure_inecc_graph}
		\end{subfigure}%
		~ 
		\begin{subfigure}[t]{0.5\textwidth}
			\centerline{
			\begin{tikzpicture}
				\node [draw,outer sep=0,inner sep=1,minimum size=10,fill=white] (v0000) at 
					(0,-2) {1000};
				\node [draw,outer sep=0,inner sep=1,minimum size=10,fill=gray] (v1000) at 
					(0,0) {0100};
				\node [draw,outer sep=0,inner sep=1,minimum size=10,fill=white] (v0100) at 
					(4,0) {0000};
				\node [draw,outer sep=0,inner sep=1,minimum size=10,fill=gray] (v1100) at 
					(4,-2) {1100};
				\draw [-] (v0000) -- (v1000);
				\draw [-] (v1000) -- (v0100);
				\draw [-] (v0100) -- (v1100);
			\end{tikzpicture}
			}
			\caption{$\GNECC$ graph of $(F,W)$}
			\label{figure_necc_graph}
		\end{subfigure}
		\caption{$\INECC$ and $\GNECC$ graphs of $(F,W)$.}
	\end{center}
\end{figure}

\begin{remark}
The chromatic number of the $\GINECC$ graph gives an upper bound for the $\GNECC$ graph. However, the $\GNECC$ graph can have a smaller chromatic number. For instance, let us consider the following BAN. Let $F: \mathbb{B}^4 \to \mathbb{B}^4$ be such that $F((0,0,0,0)) = (0,0,0,0)$, $F((1,1,0,0)) = (0,0,0,0)$, $F((1,0,0,0)) = (0,1,0,0)$, $F((0,1,0,0)) = (0,1,0,1)$, and for all other $x \in \mathbb{B}^4, F(x) = (1,1,1,1)$.
Let $W$ be the simple sequential schedule $(\{0\}, \{1\}, \{2\}, \{3\})$. Figures~\ref{figure_inecc_graph} and~\ref{figure_necc_graph} show that the chromatic number of the $\GINECC$ and $\GNECC$ graphs are respectively $3$ and $2$. So, even if the worst $\GINECC$ graph had a chromatic number equal to $2^{2n/3}$, it would not disprove the conjecture: we can still hope that the worst $\GNECC$ graph has a better chromatic number, by coloring some equivalent configurations differently.
\end{remark}

\section{Clique number in the $\GNECC$ graph} \label{section_clique_number}

The \emph{clique number} of a graph $G$, denoted by $\omega(G)$, is the size of the biggest clique of $G$. We denote by $\omega(\GNECC)$ the clique number of the $\GNECC$ graph. In this part, we find the maximum value that $\omega(\GNECC)$ can get.
It is important because we know that the chromatic number is bigger that the clique number.
So if in a $\GNECC$ graph the clique number were bigger than $2^{n/2}$, then the chromatic number would be bigger as well and the conjecture would be wrong.
However, if the clique number is smaller than $2^{n/2}$, then we cannot deduce anything about the conjecture. Lemma~\ref{lemsteps} below proves that the set of steps at which two configurations are confusable is an interval.

\begin{lem}
\label{lemsteps}
Let $(x,x') \in \CC$, $I = \ICC(x,x')$, $a = min(I)$ and $b = max(I)$.
Then $I = \cc ab$.
\end{lem}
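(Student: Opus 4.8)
The plan is to prove the two inclusions $I\subseteq\cc ab$ and $\cc ab\subseteq I$ separately. The first is immediate: by definition $a=\min(I)$ and $b=\max(I)$, so every element of $I$ lies in $\cc ab$. Hence all the content is in the reverse inclusion, namely showing that every step $j$ with $a\le j\le b$ already satisfies $F_{W_{<j}}(x)=F_{W_{<j}}(x')$, i.e. $j\in I$.

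The key structural fact I would exploit is that $F_{W_{<j}}$ is a \emph{parallel partial update}: on a coordinate $i\in W_{<j}$ it outputs $f_i(x)$, and on a coordinate $i\notin W_{<j}$ it simply copies $x_i$; moreover the index sets form an increasing chain $W_{<0}\subseteq W_{<1}\subseteq\dots\subseteq W_{<p}$. So I would fix $j\in\cc ab$ and argue coordinate by coordinate. Let $i\in V$. If $i\in W_{<j}$, then $i\in W_{<b}$ because $j\le b$; thus $F_{W_{<j}}(x)_i=f_i(x)$ and $F_{W_{<j}}(x')_i=f_i(x')$, and since $b\in I$ means $F_{W_{<b}}(x)=F_{W_{<b}}(x')$, reading this equality on the coordinate $i\in W_{<b}$ gives $f_i(x)=f_i(x')$. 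If instead $i\notin W_{<j}$, then $i\notin W_{<a}$ because $a\le j$; thus $F_{W_{<j}}(x)_i=x_i$ and $F_{W_{<j}}(x')_i=x'_i$, and since $a\in I$ means $F_{W_{<a}}(x)=F_{W_{<a}}(x')$, reading this on the coordinate $i\notin W_{<a}$ gives $x_i=x'_i$. In both cases the $i$-th coordinates coincide, so $F_{W_{<j}}(x)=F_{W_{<j}}(x')$, hence $j\in I$.

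This yields $\cc ab\subseteq I$, and together with $I\subseteq\cc ab$ we conclude $I=\cc ab$. I do not expect a genuine obstacle here; the only points requiring care are not to confuse $F_{W_{<j}}$ (the parallel update of the first $j$ blocks, which is what enters the definition of $\ICC$) with the sequential composition $F^{W^{<j}}$, and to use the monotonicity of the chain $(W_{<i})_i$ to guarantee that the relevant coordinate can always be placed either inside $W_{<b}$ (when it is updated by step $j$) or outside $W_{<a}$ (when it is not updated by step $j$).
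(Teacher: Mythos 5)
Your proof is correct, and it rests on exactly the same two ingredients as the paper's argument: the coordinate-wise description of the partial parallel update $F_{W_{<j}}$ (updated coordinates read $f_i(x)$, the others read $x_i$) and the nesting $W_{<a}\subseteq W_{<j}\subseteq W_{<b}$. The only difference is organizational: you argue directly that every $j\in\cc ab$ lies in $I$, using both endpoints $a$ and $b$, whereas the paper takes a minimal $j\in\cc ab\setminus I$ and derives a contradiction with $b\in I$; your direct version is, if anything, the cleaner of the two.
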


\begin{proof}
Since $a = min(I)$ and $b = max(I)$, we have $I \subseteq \cc ab$.
For the sake of contradiction, let us suppose that there exists $j \in \cc ab$ such that $j \not \in I$. Let $j$ be the smallest such number. So $F_{W_{<j}}(x) \neq F_{W_{<j}}(x')$, $j \neq a$ because $a \in I$ and $j-1 \in \cc ab$ (because  ${j\neq a}$). Furthermore, $j-1$ does not valid this propriety, because $j$ is the smallest number which validates it.As a consequence, $F_{W_{<j-1}}(x) = F_{W_{<j-1}}(x')$ and $ F_{W_{<j}}(x) \neq F_{W_{<j}}(x')$. So $F(x)_{W_{j-1}} \neq F(x')_{W_{j-1}}$. Furthermore, $F_{W_{<b}}(x)_{W_{j-1}} = F(x)_{W_{j-1}}$ because $j \leq b$ (and then $W_{j-1} \subseteq W_{<b}$) and $F_{W_{<b}}(x')_{W_{j-1}} = F(x')_{W_{j-1}}$.
So $ F_{W_{<b}}(x)_{W_{j-1}} \ne F_{W_{<b}}(x')_{W_{j-1}}$, and thus $ F_{W_{<b}}(x) \neq F_{W_{<b}}(x')$.
As a consequence, $b \notin{} I$ which is a contradiction.
This gives $I = \cc ab$.
\end{proof}

Lemma~\ref{lem10} shows that if two configurations are confusable with a third one at a given step, then they are also confusable between themselves at this step.

\begin{lem} 
\label{lem10}
Let $x,x',x'' \in \mathbb{B}^n$. We have: $\ICC(x,x') \cap \ICC(x,x'') \subseteq \ICC(x',x'')$.
\end{lem}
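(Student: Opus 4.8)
The plan is to unfold the definition of $\ICC$ and invoke transitivity of equality; there is essentially nothing else to do. Recall that by definition, for any configurations $u,v\in\mathbb B^n$ we have $\ICC_{F,W}(u,v)=\{i\in\cc0p\ |\ F_{W_{<i}}(u)=F_{W_{<i}}(v)\}$. So I would start by fixing an arbitrary $i\in\ICC(x,x')\cap\ICC(x,x'')$ and aim to show $i\in\ICC(x',x'')$.

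Since $i\in\ICC(x,x')$, the definition gives $F_{W_{<i}}(x)=F_{W_{<i}}(x')$. Since $i\in\ICC(x,x'')$, it gives $F_{W_{<i}}(x)=F_{W_{<i}}(x'')$. Chaining these two equalities through the common term $F_{W_{<i}}(x)$ yields $F_{W_{<i}}(x')=F_{W_{<i}}(x'')$, which is exactly the condition for $i\in\ICC(x',x'')$. As $i$ was an arbitrary element of $\ICC(x,x')\cap\ICC(x,x'')$, this establishes the claimed inclusion.

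There is no real obstacle here: the statement is a direct consequence of the equivalence-relation-like behaviour of "agreeing after $i$ update steps". The only thing to be mildly careful about is that $i$ ranges over $\cc0p$ in all three sets, so membership in $\cc0p$ is automatically inherited from the hypothesis and need not be rechecked. (One could phrase it even more slickly by noting that for each fixed $i$, the relation $u\sim_i v\iff F_{W_{<i}}(u)=F_{W_{<i}}(v)$ is an equivalence relation on $\mathbb B^n$, and the statement is just transitivity of $\sim_i$; but the elementary argument above suffices.)
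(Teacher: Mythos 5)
Your proof is correct and is essentially identical to the paper's: both fix an arbitrary $i$ in the intersection, unfold the definition of $\ICC$, and conclude $F_{W_{<i}}(x')=F_{W_{<i}}(x'')$ by transitivity of equality. Nothing is missing.
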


\begin{proof}
Let $i \in \ICC(x,x') \cap \ICC(x,x'')$. 
Thus, $F_{W_{<i}}(x) = F_{W_{<i}}(x')$ and $F_{W_{<i}}(x) = F_{W_{<i}}(x'')$. As a consequence, $F_{W_{<i}}(x') =  F_{W_{<i}}(x'')$ and $i \in \ICC(x',x'')$. Hence, we have $\forall i \in \ICC(x,x') \cap \ICC(x,x''), i \in \ICC(x',x'')$. 
\end{proof}

Lemma~\ref{lem11} shows that if two configurations are confusable with a third one, the two former ones are confusable if and only if they are confusable with the third one at some simultaneous step.

\begin{lem} 
\label{lem11}
Let $x,x',x'' \in \mathbb{B}^n$ such that $ (x,x') \in \CC$ and $(x,x'') \in \CC$. Then, we have: $\ICC(x,x') \cap \ICC(x,x'') \neq \emptyset \iff (x',x'') \in \CC$.
\end{lem}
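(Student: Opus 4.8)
The plan is to establish the two implications separately. The forward implication is immediate from Lemma~\ref{lem10}: if $i\in\ICC(x,x')\cap\ICC(x,x'')$ then $i\in\ICC(x',x'')$, hence $\ICC(x',x'')\neq\emptyset$, i.e. $(x',x'')\in\CC$. All the real work is in the converse.

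For the converse I would first unfold the definition of $F_{W_{<i}}$, exactly as in the proof of Lemma~\ref{lemsteps}, to obtain an explicit description of $\ICC$. For a pair $(u,v)$ write $E(u,v)=\{k\in V : u_k\neq v_k\}$ and $D(u,v)=\{k\in V : F(u)_k\neq F(v)_k\}$. Then $F_{W_{<i}}(u)=F_{W_{<i}}(v)$ holds iff $E(u,v)\subseteq W_{<i}$ and $W_{<i}\cap D(u,v)=\emptyset$; since $W_{<i}=\{k\in V : W(k)<i\}$, this gives $\ICC(u,v)=\cc{a(u,v)}{b(u,v)}$ where $a(u,v)=1+\max\{W(k) : k\in E(u,v)\}$ (taken to be $0$ if $E(u,v)=\emptyset$) and $b(u,v)=\min\{W(k) : k\in D(u,v)\}$ (taken to be $p$ if $D(u,v)=\emptyset$); in particular $(u,v)\in\CC$ iff $a(u,v)\le b(u,v)$, which also re-proves Lemma~\ref{lemsteps}.

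The second ingredient is purely combinatorial: since states are Boolean, for every coordinate $k$ an even number (namely $0$ or $2$) of the three sets $E(x,x'),E(x,x''),E(x',x'')$ contain $k$, and likewise for $D(x,x'),D(x,x''),D(x',x'')$. Now assume $(x,x'),(x,x''),(x',x'')\in\CC$ and, for contradiction, $\ICC(x,x')\cap\ICC(x,x'')=\emptyset$. Both intervals being nonempty, one lies strictly to the left of the other, and exchanging $x'$ with $x''$ if needed (which swaps these two intervals and leaves $\ICC(x',x'')$ unchanged) we may assume $b(x,x')<a(x,x'')$; this forces $D(x,x')\neq\emptyset$ and $E(x,x'')\neq\emptyset$. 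Pick $\ell\in D(x,x')$ with $W(\ell)=b(x,x')$ and $k\in E(x,x'')$ with $W(k)=a(x,x'')-1$, so that $W(\ell)\le W(k)$. By the parity fact, $\ell$ lies in exactly one of $D(x,x''),D(x',x'')$ and $k$ in exactly one of $E(x,x'),E(x',x'')$, and each possibility yields a contradiction: $\ell\in D(x,x'')$ gives $b(x,x'')\le W(\ell)\le W(k)<a(x,x'')$, contradicting $(x,x'')\in\CC$; symmetrically $k\in E(x,x')$ contradicts $(x,x')\in\CC$; and the remaining case $\ell\in D(x',x''),\,k\in E(x',x'')$ gives $b(x',x'')\le W(\ell)\le W(k)<a(x',x'')$, contradicting $(x',x'')\in\CC$.

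The main obstacle, as I see it, is recognizing that Lemmas~\ref{lemsteps} and~\ref{lem10} alone do not suffice for the converse: they only force $\ICC(x,x')$, $\ICC(x,x'')$ and $\ICC(x',x'')$ to be pairwise disjoint once $\ICC(x,x')\cap\ICC(x,x'')=\emptyset$, and three pairwise disjoint nonempty intervals are perfectly consistent as abstract intervals. One must therefore reopen the definition and exploit the symmetric-difference (even-parity) structure of the difference sets $E(\cdot,\cdot)$ and $D(\cdot,\cdot)$; once that is in place, the only care needed is to choose the right two witness coordinates and to keep the short case analysis tidy.
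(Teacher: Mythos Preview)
Your proof is correct, and it takes a genuinely different route from the paper's argument. The paper argues the converse directly at the level of time steps: assuming $\ICC(x,x')=\cc ab$ and $\ICC(x,x'')=\cc{a'}{b'}$ are disjoint with $b<a'$, it picks an arbitrary $j\in\ICC(x',x'')$ and rules out, one by one, the possibilities $j\in\cc ab$, $j\in\cc{a'}{b'}$, $j\in\co0a$, and $j\in\oo b{a'}\cup\oo{b'}{p}$, each time by chasing equalities and inequalities of partial updates. Your approach instead first derives the closed-form description $\ICC(u,v)=\cc{a(u,v)}{b(u,v)}$ in terms of the difference sets $E(u,v)$ and $D(u,v)$, and then exploits the Boolean parity identity $E(x',x'')=E(x,x')\,\triangle\,E(x,x'')$ (and the analogous one for $D$) to run a short case analysis on two witness coordinates $\ell$ and $k$ realising the relevant endpoints. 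What you gain is a cleaner, more symmetric argument that also re-derives Lemma~\ref{lemsteps} as a corollary and makes transparent \emph{why} three pairwise disjoint nonempty $\ICC$-intervals are impossible; what the paper's argument gains is that it needs no auxiliary notation and stays entirely within the objects already defined. Both are valid and of comparable length.
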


\begin{proof}
Suppose that $\ICC(x,x') \cap \ICC(x,x'') \neq \emptyset$.
By Lemma~\ref{lem10}, we know that $\ICC(x,x') \cap \ICC(x,x'') \subseteq \ICC(x',x'')$.
So $\ICC(x',x'') \neq \emptyset$.
As a result, $(x',x'') \in \CC$.
Now, suppose that we have $(x',x'') \in \CC$ and let $ \cc ab = \ICC(x,x')$ and $ \cc{a'}{b'} = \ICC(x,x'')$.
For the sake of contradiction, consider that $\ICC(x,x') \cap \ICC(x,x'') = \emptyset$, \ie $\cc ab \cap \cc{a'}{b'}  = \emptyset$.
 With no loss of generality, consider that 
 $0 \leq a \leq b < a' \leq b' 
 <p = |W|$.
Let $j \in \ICC(x',x'')$. Thus, $F_{W_{<j}}(x') = F_{W_{<j}}(x'')$.
 We can show that $j \not \in \cc ab\cup \cc{a'}{b'}$.
Indeed, if $j \in \cc ab$, then $j \in \ICC(x,x')$ and $F_{W_{<j}}(x) = F_{W_{<j}}(x')$.
So $F_{W_{<j}}(x) = F_{W_{<j}}(x'')$ (because, by definition of $j$, we have $F_{W_{<j}}(x') = F_{W_{<j}}(x'')$) and, as a consequence, $j \in \ICC(x,x'')$ and thus $j \in \ICC(x,x') \cap \ICC(x,x'')$.
As a result, $\ICC(x,x') \cap \ICC(x,x'') \neq \emptyset$.
 There is a contradiction, so $j \not \in \cc ab$. Similarly, we can prove that $j \notin{} \cc{a'}{b'}$.
Now, let us prove that $j \notin{} \co{0}{a }$.
For the sake of contradiction let us say that $j \in \co{0}{a }$.
Then, $\exists j' \in \oo{j}{a },$ $F(x')_{W_{j'}} \neq F(x'')_{W_{j'}}$.
Otherwise, we would have $F_{W_{<a}}(x'') = F_{W_{<a}}(x') = F_{W_{<a}}(x)$ and then $\cc ab \cap \cc{a'}{b'}  \neq \emptyset$.
Furthermore, we know that $F_{W_{<a}}(x') = F_{W_{<a}}(x)$ (because $a \in \ICC(x,x')$) and $W_{j'} \subseteq W_{<a}$ (because $j' < a$) so $F(x')_{W_{j'}} = F(x)_{W_{j'}}$ and thus $F(x'')_{W_{j'}} \neq F(x)_{W_{j'}}$.
 As a consequence, $F_{W_{<a'}}(x')_{W_{j'}} \neq F_{W_{<a'}}(x)_{W_{j'}}$ (because $W_{j'} \subseteq W_{<a'}$ since $j' < a < a'$). So $a' \notin{} \ICC(x,x'')$.
This is a contradiction. So $j \notin{} \co{0}{a }$.
Now, let us prove that $j \notin{} \oo{b}{a' } \cup \oo{b'}{p }$.
If $j \in \oo{b}{a' } \cup \oo{b'}{p }$ then $j>b$.
 We know that $F(x)_{W_{b}} \neq F(x')_{W_{b}}$ (otherwise we would have $F_{W_{<b+1}}(x) \neq F_{W_{<b+1}}(x')$ and then $b+1 \in \ICC(x,x')$).
However, we have $F(x)_{W_{b}} = F(x'')_{W_{b}}$, because $W_{b} \subseteq W_{<a'}$ since $b < a'$. So $F(x')_{W_{b}} \ne F(x'')_{W_{b}}$.
 Thus, $F_{W_{<j}}(x') \neq F_{W_{<j}}(x'')$ because ${W_{<j}}$ because $b<j$, which is a contradiction. As a consequence, $j \notin{} \oo{b}{a' } \cup \oo{b'}{p }$.
 As a result, $j$ does not exist. Thus, $ \ICC(x',x'') = \emptyset$, and finally, $ (x',x'') \notin{} \CC $.
\end{proof}

Lemma~\ref{lem12} shows that all cliques of the $\GNECC$ graph have at least one step during which all the configurations of the clique are simultaneously confusable.

\begin{lem}
\label{lem12}
Let $X$ be a clique of the $\GNECC$ graph. Then, we have: $\exists i, \forall x,x' \in X,\  i \in \ICC(x,x')$.
\end{lem}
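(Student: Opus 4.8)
The statement claims that for any clique $X$ in the $\GNECC$ graph there is a single step $i$ that lies in $\ICC(x,x')$ for \emph{all} pairs $x,x'\in X$. The natural approach is a Helly-type argument on the real line: by Lemma~\ref{lemsteps}, each $\ICC(x,x')$ is an interval $\cc{a}{b}$ of $\cc0p$, and by the definition of a clique of the $\GNECC$ graph, $(x,x')\in\NECC\subseteq\CC$ for every pair, so each such interval is nonempty. The classical Helly theorem in dimension one says that a finite family of intervals has a common point as soon as every two of them intersect. So the whole proof reduces to showing pairwise intersection: for all $x,x',x'',x'''\in X$ (not necessarily distinct), $\ICC(x,x')\cap\ICC(x'',x''')\neq\emptyset$.

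\textbf{Key steps.} First I would invoke Lemma~\ref{lemsteps} to record that every edge of the clique gives a nonempty interval $\ICC(x,x')$. Second, I would reduce to the case of two edges sharing a vertex: given four configurations $x,x',x'',x'''$ in $X$, I want $\ICC(x,x')\cap\ICC(x'',x''')\neq\emptyset$. Using that $X$ is a clique, $(x,x'')\in\CC$ and $(x,x''')\in\CC$ as well, so by Lemma~\ref{lem11} applied to the triple $(x,x'',x''')$ and the fact that $(x'',x''')\in\CC$, we get $\ICC(x,x'')\cap\ICC(x,x''')\neq\emptyset$; pick $i$ in this intersection. By Lemma~\ref{lem10}, $i\in\ICC(x'',x''')$. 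A symmetric manipulation handles $\ICC(x,x')$: since $i$ witnesses confusability between $x''$ and both $x$ and $x'''$... — here one must be a little careful and re-run Lemma~\ref{lem11}/Lemma~\ref{lem10} so that the \emph{same} step works for the edge $(x,x')$ too. The clean way is: it suffices to prove that for any three configurations $u,v,w\in X$ the three intervals $\ICC(u,v),\ICC(u,w),\ICC(v,w)$ have a common point, because then pairwise intersection of \emph{all} edge-intervals of the clique follows (any two edges span at most four vertices, hence live inside the complete graph on $\{x,x',x'',x'''\}$, and one chases intersections through the shared vertices), and one concludes by Helly. So the core lemma to establish is the three-interval case.

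\textbf{The three-interval case.} Let $u,v,w\in X$. By Lemma~\ref{lem11} (with center $u$), $\ICC(u,v)\cap\ICC(u,w)\neq\emptyset$ since $(v,w)\in\CC$; let $i$ be in it. Then $i\in\ICC(u,v)$ and $i\in\ICC(u,w)$, and by Lemma~\ref{lem10}, $i\in\ICC(v,w)$. Hence $i$ lies in all three intervals simultaneously. This already handles a triangle; for a general clique, take all edge-intervals $\{\ICC(x,x')\}_{x,x'\in X}$: given two of them, say $\ICC(a,b)$ and $\ICC(c,d)$, if they share a vertex the triangle argument applies directly, and if they are vertex-disjoint, apply the triangle argument first to $\{a,b,c\}$ to move into $\ICC(a,c)$, then to $\{a,c,d\}$, producing a common point of $\ICC(a,b)$-related intervals and $\ICC(c,d)$; more simply, one shows by induction on $|X|$ that $\bigcap_{x\ne x'\in X}\ICC(x,x')\neq\emptyset$: fix a vertex $x_0\in X$; the intervals $\ICC(x_0,x)$ for $x\in X\setminus\{x_0\}$ pairwise intersect by the triangle case (through $x_0$), so by Helly they have a common point $i$; then Lemma~\ref{lem10} shows $i\in\ICC(x,x')$ for every pair $x,x'$ (since $i\in\ICC(x_0,x)\cap\ICC(x_0,x')$). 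Setting this $i$ as the witness proves the lemma.

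\textbf{Main obstacle.} The only real subtlety is not Helly itself (dimension one is elementary) but making sure the same step $i$ serves \emph{every} pair of the clique, including pairs not incident to the chosen pivot $x_0$; that is exactly what Lemma~\ref{lem10} delivers, so once the pivot trick is set up the argument is short. I would present the induction/pivot version since it avoids invoking Helly as a black box and keeps the paper self-contained.
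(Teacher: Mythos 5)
Your final argument (fix a pivot $x_0$, show the intervals $\ICC(x_0,x)$ pairwise intersect via Lemma~\ref{lem11} and the clique hypothesis, apply one-dimensional Helly using Lemma~\ref{lemsteps}, then propagate the common step $i$ to every pair via Lemma~\ref{lem10}) is correct and is exactly the proof given in the paper. The preliminary detour through four-configuration pairs is unnecessary, but the pivot/induction version you settle on matches the paper's reasoning step for step.
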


\begin{proof}
Let $x \in X = \{x^1, x^2, \dots, x^k\}$ such that $|X| = k$, and let $I = I_1 \cap I_2 \cap \dots \cap I_k$ where $I_1 = \ICC(x, x^1), \dots, I_k = \ICC(x, x^k)$. We can prove that all the intervals intersect each other two by two. In other words, $\forall i,i' \in \co0k,\ I_i \cap I_{i'} \neq \emptyset$. For the sake of contradiction, assume that there are disjoint intervals. In this case, we would have $x', x'' \in X$ such that $\ICC(x,x') \cap \ICC(x,x'') = \emptyset$. By Lemma~\ref{lem11}, we would have $(x',x'') \notin \CC$. However, $x',x'' \in X$, so $(x',x'') \in \CC$. There is a contradiction. Consequently, all the intervals intersect each other two by two, and we know that if a set of intervals intersect each other two by two then they have an interval in common. So $I \neq \emptyset$.\\ Let $i \in I$. Now, let us prove that $\forall x', x'' \in X,\ i \in \ICC(x',x'')$. Let $x',x'' \in X$. We have $i \in \ICC(x,x')$ and $i \in \ICC(x,x'')$. Thus, $F_{W_{<i}}(x) = F_{W_{<i}}(x')$ and $F_{W_{<i}}(x) = F_{W_{<i}}(x'')$, which implies that $F_{W_{<i}}(x') = F_{W_{<i}}(x'')$. As a result, $i \in \ICC(x',x'')$ and $\forall x',x'' \in X,\ i \in \ICC(x',x'')$.
\end{proof}

Using Lemma~\ref{lem12}, Theorem~\ref{th_clique} shows that the clique number of any $\GNECC$ graph is less than or equal to $2^{n/2}$.

\begin{theorem}
\label{th_clique}
$\omega(\GNECC) \leq 2^{ \left\lfloor n/2 \right\rfloor}$.
\end{theorem}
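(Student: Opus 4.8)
The plan is to fix an arbitrary clique $X$ of the $\GNECC$ graph and bound $|X|$ by $2^{\lfloor n/2\rfloor}$, from which $\omega(\GNECC)\le 2^{\lfloor n/2\rfloor}$ follows immediately. First I would invoke Lemma~\ref{lem12} to get a single step $i\in\cc0p$ common to \emph{all} pairs of $X$, that is, a configuration $y\in\mathbb{B}^n$ with $F_{W_{<i}}(x)=y$ for every $x\in X$. Set $J=W_{<i}$ and $m=|J|$; so $m\in\cc0n$.

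The heart of the argument is two complementary counting observations, one on the domain side and one on the image side. On the domain side: for every automaton $j\notin J$ we have $F_{W_{<i}}(x)_j=x_j$ by definition of $F_{W_{<i}}$, so $x_{\compl J}=y_{\compl J}$ for all $x\in X$; hence $X$ is contained in the subcube of configurations agreeing with $y$ outside $J$, giving $|X|\le 2^{m}$. On the image side: for every $j\in J$ we have $f_j(x)=F_{W_{<i}}(x)_j=y_j$ for all $x\in X$, so $F(x)_J=y_J$ is constant over $X$; hence $F(X)$ is contained in the subcube of configurations agreeing with $y$ on $J$, which has size $2^{\,n-m}$. Since $X$ is a clique of the $\GNECC$ graph, every pair of distinct configurations of $X$ lies in $\NEC$, i.e. $F$ is injective on $X$, and therefore $|X|=|F(X)|\le 2^{\,n-m}$.

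Combining the two bounds yields $|X|\le\min(2^{m},2^{\,n-m})=2^{\min(m,n-m)}\le 2^{\lfloor n/2\rfloor}$, because for any $m\in\cc0n$ at least one of $m$ and $n-m$ is at most $\lfloor n/2\rfloor$. As $X$ was arbitrary, this gives $\omega(\GNECC)\le 2^{\lfloor n/2\rfloor}$.

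I expect the only delicate point to be the clean extraction of the common confusion step and $y$ from Lemma~\ref{lem12}, together with carefully bookkeeping which coordinates are frozen on the domain side ($\compl J$) versus the image side ($J$); once $y$, $J$, $m$ are in place, the two subcube inclusions plus injectivity of $F$ on $X$ reduce everything to the elementary inequality $\min(m,n-m)\le\lfloor n/2\rfloor$. No further interval or Helly-type machinery is needed beyond what Lemma~\ref{lem12} already packages.
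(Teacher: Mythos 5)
Your proposal is correct and follows essentially the same route as the paper: extract a common confusion step $i$ from Lemma~\ref{lem12}, observe that all of $X$ agrees outside $W_{<i}$ while all of $F(X)$ agrees on $W_{<i}$, and use that cliques lie in $\NEC$ to get injectivity of $F$ on $X$. The only cosmetic difference is that you combine the two subcube bounds via $\min(m,n-m)$ where the paper does an explicit case split on $|W_{<i}|$ versus $n/2$.
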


\begin{proof}
Let $X$ be the biggest clique of the $\GNECC$ graph, $x \in X$ and $i$ such that $\forall x, x' \in X,\ i \in \ICC(x,x')$ (Thanks to Lemma~\ref{lem12}, we know there is one). In other words, $\forall x' \in X,\ F_{W_{<i}}(x') = F_{W_{<i}}(x)$. So $\forall x,x' \in X,\  x_{\overline{W_{<i}}} = x'_{\overline{W_{<i}}}$ and $F(x)_{W_{<i}} = F(x')_{W_{<i}} $. Let $x \in X$. There are 2 cases:
\begin{itemize}
\item $|W_{<i}| < n/2$. Then, we have $|\overline{W_{<i}}| \geq n/2 $. Thus, $|\{x'\ |\  x'_{\overline{W_{<i}}} = x_{\overline{W_{<i}}}\}| < 2^{n/2}$ and, since $X \subseteq \{x'\ |\ x'_{W_{<i}} = x_{W_{<i}} \}$, we have $|X| < 2^{n/2}$.
\item $|W_{<i}| \geq n/2$. Then, we have $\{F(x')\ |\ x' \in X \} \subseteq  \{x'\ |\  F(x')_{W_{<i}} = F(x)_{W_{<i}}\} $ and $|\{F(x')\ |\ F(x')_{W_{<i}} = F(x)_{W_{<i}}\}| \leq 2^{n/2}$.  In this case, since all configurations of $X$ are not equivalent, we have $\forall x, x' \in X,\ x \neq x' \implies F(x) \neq F(x')$. Thus, $|X| \leq |\{F(x')\ |\  x' \in X \}|$. As a consequence, $|X| \leq 2^{n/2}$.
\end{itemize}
In all cases, we have $|X| \leq 2^{n/2}$. So $\omega(\GNECC) \leq 2^{n/2}$.
\end{proof}

This result supports Conjecture~\ref{conjecture1} because the $\GNECC$ graphs with the biggest chromatic number that we succeeded to build are graphs with big clique number. It seems we reached the limit of this technique.

\section{Class of bijective BANs} \label{section_special_classes}

In this part, we study BANs whose global transition functions are bijective, \ie BANs whose dynamics with a parallel update schedule are only composed of recurrent configurations. For this class of BANs, we can prove a result which is really close to the conjecture. We prove this using two intermediate lemmas. The first one is that if two configurations are confusable then either the first parts of the two images are equal or the second parts of the two configurations are.

\begin{lem}
\label{lem13}
If $W = (0, 1, \dots, n)$ then $\forall (x,x') \in \CC,\ F(x)_{\co{0}{n/2 }} = F(x')_{\co{0}{n/2 }}$ or $x_{\co{n/2}{n }} = x'_{\co{n/2}{n }}$.
\end{lem}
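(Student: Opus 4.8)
The plan is to reduce the statement to a direct unpacking of the definition of $\CC$ for the specific schedule $W=(\{0\},\{1\},\dots,\{n-1\})$, followed by a case split on the confusability step. First I would record the elementary observation that for this simple sequential schedule one has $W_{<i}=\co0i$ for every $i\in\cc0n$, so that for any configuration $x$, the coordinate $F_{W_{<i}}(x)_j$ equals $f_j(x)$ when $j\in\co0i$ and equals $x_j$ when $j\in\co in$. Comparing coordinatewise, this shows that $F_{W_{<i}}(x)=F_{W_{<i}}(x')$ is equivalent to the conjunction $F(x)_{\co0i}=F(x')_{\co0i}$ and $x_{\co in}=x'_{\co in}$.

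Next, given $(x,x')\in\CC$, the definition provides some $i\in\cc0n$ with $F_{W_{<i}}(x)=F_{W_{<i}}(x')$; I would fix one such $i$. The key step is then a trivial dichotomy on the position of $i$ relative to $n/2$. If $i\ge n/2$, then $\co0{n/2}\subseteq\co0i$, so the first conjunct above restricts to $F(x)_{\co0{n/2}}=F(x')_{\co0{n/2}}$, which is the first alternative of the statement. If $i\le n/2$, then $\co{n/2}n\subseteq\co in$, so the second conjunct restricts to $x_{\co{n/2}n}=x'_{\co{n/2}n}$, which is the second alternative. Since every integer $i$ satisfies $i\le n/2$ or $i\ge n/2$, the two cases are exhaustive, and the lemma follows.

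I do not expect any genuine obstacle here: the argument is purely a matter of decoding $F_{W_{<i}}$ for the simple sequential schedule and keeping the index sets straight, and it does not use any of the earlier structural results of this section (such as Lemma~\ref{lemsteps}). The only minor point of care is the reading of $\co0{n/2}$ and $\co{n/2}n$ when $n$ is odd; since these two sets are meant to form a partition of $V$, the same reasoning applies verbatim, and as elsewhere one may assume $n$ even by adjoining a dummy automaton if a uniform notation is preferred.
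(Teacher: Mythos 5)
Your proof is correct and follows essentially the same route as the paper's: decode $F_{W_{<i}}(x)=F_{W_{<i}}(x')$ for the simple sequential schedule into the two conjuncts $F(x)_{\co0i}=F(x')_{\co0i}$ and $x_{\co in}=x'_{\co in}$, then split on $i\le n/2$ versus $i\ge n/2$. The only (immaterial) difference is that the paper picks the smallest witnessing $i$, whereas you correctly observe that any witness suffices.
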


\begin{proof}
Let $(x,x') \in CC$. Then, $\exists i \in \co{0}{n },\ F_{\co{0}{i }}(x) = F_{\co{0}{i }}(x')$. Let $i$ be the smallest such number. We have: $F(x)_{\co{0}{i }} = F(x')_{\co{0}{i }}$ and $x_{\co{i}{n }} = x'_{\co{i}{n }}$. Then, $i$ can follow the two cases below:
\begin{itemize}
\item $i \leq n/2$. Then, $\co{n/2}{n } \subseteq \co{i}{n }$ and $x_{\co{n/2}{n }} = x'_{\co{n/2 }{n }}$;
\item $i \geq n/2$. Then, $\co{0}{n/2 } \subseteq \co{0}{i }$ and $F(x)_{\co{0}{n/2 }} = F(x')_{\co{0}{n/2 }}$.
\end{itemize}
And we get the expected result.
\end{proof}

The next lemma is a simple consequence of Lemma~\ref{lem13}: if we take the neighbors of a configuration in a $\GNECC$ graph and we take the set of images of these configurations when we apply $F$, then this set has less than $2^{n/2 + 1} - 2$ elements.

\begin{lem}
\label{lem14}
If $W' = (0, 1, \dots, n)$ then $\forall x \in \mathbb{B}^n,\ |\{F(x')\ |\ (x,x') \in \NECC \}| \leq 2^{ n/2 +1} - 2$.
\end{lem}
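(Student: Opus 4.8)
The plan is to fix a configuration $x$ and control the set $\{F(x')\ |\ (x,x')\in\NECC\}$ by first splitting it according to the dichotomy supplied by Lemma~\ref{lem13}. Since $\NECC\subseteq\CC$, each $x'$ with $(x,x')\in\NECC$ belongs to at least one of the two classes
\[
A=\{x'\ |\ (x,x')\in\NECC\text{ and }F(x')_{\co0{n/2}}=F(x)_{\co0{n/2}}\},\qquad
B=\{x'\ |\ (x,x')\in\NECC\text{ and }x'_{\co{n/2}n}=x_{\co{n/2}n}\},
\]
so that $\{F(x')\ |\ (x,x')\in\NECC\}=\{F(x')\ |\ x'\in A\}\cup\{F(x')\ |\ x'\in B\}$. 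It then suffices to prove that each of the two sets on the right-hand side has at most $2^{n/2}-1$ elements, since adding the two bounds gives $2^{n/2+1}-2$.

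For the $A$-part I would argue on the images: every $F(x')$ with $x'\in A$ agrees with $F(x)$ on the coordinate block $\co0{n/2}$, hence lies in the set $T=\{y\in\mathbb{B}^n\ |\ y_{\co0{n/2}}=F(x)_{\co0{n/2}}\}$, which has exactly $2^{n/2}$ elements. But $F(x)\in T$ is never reached: if $x'\in A$ then in particular $(x,x')\in\NEC$, so $F(x')\ne F(x)$. Hence $\{F(x')\ |\ x'\in A\}\subseteq T\setminus\{F(x)\}$ has at most $2^{n/2}-1$ elements. For the $B$-part I would instead count the sources: every $x'\in B$ has the same restriction to $\co{n/2}n$ as $x$, so $B$ is contained in a set of $2^{n/2}$ configurations, and therefore $\{F(x')\ |\ x'\in B\}$ has at most $2^{n/2}$ elements; since $x$ itself lies in that set of configurations, $F(x)$ is a priori among these images, but exactly as above $F(x)\notin\{F(x')\ |\ x'\in B\}$ because an $\NECC$ pair is non-equivalent, so we again lose one element and get at most $2^{n/2}-1$.

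Combining the two estimates yields the claim. The only point needing care is the $-2$ in the bound: both ambient sets (the set $T$ of configurations with prescribed values on the first half, and the image set of the configurations with prescribed values on the second half) contain $F(x)$, yet $F(x)$ can never be the image of an $\NECC$-neighbour of $x$ — this is the one spot where the \emph{not equivalent} part of $\NECC$, rather than just \emph{confusable}, is used — so $F(x)$ can be discarded from each, saving one unit twice. Note that bijectivity of $F$ is not needed for this lemma; it will only be used afterwards to turn this degree bound into the announced bound on $\kappa$.
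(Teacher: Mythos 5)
Your proof is correct and follows essentially the same route as the paper: both apply the dichotomy of Lemma~\ref{lem13} to split the $\NECC$-neighbours of $x$ into the two classes, bound each image set by $2^{n/2}$, and use non-equivalence to discard $F(x)$ from each, yielding the $-2$. The only cosmetic difference is that you subtract $1$ from each part before summing, whereas the paper sums to $2^{n/2+1}$ first and then subtracts $2$; the content is identical.
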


\begin{proof}
Let $x \in \mathbb{B}^n$. According to Lemma~\ref{lem13}, $\forall x' \in \mathbb{B}^n,\  F(x)_{\co{0}{n/2 }} = F(x')_{\co{0}{n/2 }}$ or $x_{\co{n/2}{n }} = x'_{\co{n/2}{n }}$.
Then, $\{ x'\ |\ (x,x') \in \NECC\} \subseteq \{x'\ |\ x_{\co{n/2}{n }} = x'_{\co{n/2}{n }}\} \cup \{x'\ |\ F(x)_{\co{0}{n/2 }} = F(x')_{\co{0}{n/2 }}\}$.
So $\{F(x')\ |\ (x,x') \in \NECC \} \subseteq \{F(x')\ |\ x_{\co{n/2}{n }} = x'_{\co{n/2}{n }}\} \cup \{F(x')\ |\ F(x)_{\co{0}{n/2 }} = F(x')_{\co{0}{n/2 }}\}$.
Thus, $|\{F(x')\ |\ (x,x') \in \NECC\}| \leq |\{F(x')\ |\ x_{\co{n/2}{n }} = x'_{\co{n/2}{n }} \}| + |\{F(x')\ |\ F(x)_{\co{0}{n/2 }} = F(x')_{\co{0}{n/2 }}\}|$.
And we have: $|\{ F(x')\ |\ F(x)_{\co{0}{n/2 }} = F(x')_{\co{0}{n/2 }}\}| \leq 2^{n/2}$.
Furthermore, $|\{x'\ |\  x_{\co{n/2}{n }} = x'_{\co{n/2}{n }}\}| \leq 2^{n/2}$.
As a consequence, $|\{F(x')\ |\ x_{\co{n/2}{n }} = x'_{\co{n/2}{n }}\}| \leq 2^{n/2}$.
So, $|\{F(x')\ |\ (x,x') \in \NECC \}| \leq 2^{n/2+1}$.
Furthermore, $F(x) \in \{F(x')\ |\ x_{\co{n/2}{n }} = x'_{\co{n/2}{n }}\}$ and $F(x) \in \{F(x')\ |\  F(x)_{\co{0}{n/2 }} = F(x')_{\co{0}{n/2 }}\}$ but $F(x) \notin{} \{F(x')\ |\ (x,x') \in \NECC \}$.
Consequently, $|\{F(x')\ |\ (x,x') \in \NECC\}| \leq 2^{n/2+1} - 2$, which is the expected result.
\end{proof}

Using the fact that we are talking about a bijective function, and thanks to Lemma~\ref{lem14}, we bound the degree of every configuration in the $\GNECC$ graph. Then, we deduce a bound for the chromatic number of the $\GNECC$ and, thus, a bound for $\kappa$.

\begin{theorem}
If $F: \mathbb{B}^n \to \mathbb{B}^n$ is a bijective function then $\kappa(F,W) \leq n/2+1$.
\end{theorem}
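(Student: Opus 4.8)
The plan is to reduce the problem to the simple sequential update schedule, where Lemmas~\ref{lem13} and~\ref{lem14} apply, and then to control the chromatic number of the $\GNECC$ graph through its \emph{maximum degree}. By the remarks of Section~\ref{section_inecc}, sequentializing $W$ and then relabelling to the simple sequential schedule $W' = (\{0\},\{1\},\dots,\{n-1\})$ only adds edges to the $\GNECC$ graph, so $\chi(\GNECC_{F,W}) \leq \chi(\GNECC_{F,W'})$; combined with Theorem~\ref{theorem_chro} this gives
\[
\kappa(F,W) = \lceil \log_2 \chi(\GNECC_{F,W}) \rceil \leq \lceil \log_2 \chi(\GNECC_{F,W'}) \rceil .
\]
(As in Lemma~\ref{lem_kp_lower_bound}, one may assume $n$ even, adding a useless automaton and extending $F$ bijectively otherwise.)

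Next I would bound the degree of an arbitrary vertex $x$ in the $\GNECC_{F,W'}$ graph. Its neighbourhood is $\{x'\ |\ (x,x') \in \NECC\}$, and here is where bijectivity enters: since $F$ is a bijection, $x' \mapsto F(x')$ is injective, so
\[
|\{x'\ |\ (x,x') \in \NECC\}| = |\{F(x')\ |\ (x,x') \in \NECC\}| \leq 2^{n/2+1} - 2
\]
by Lemma~\ref{lem14}. Hence the $\GNECC_{F,W'}$ graph has maximum degree at most $\Delta := 2^{n/2+1}-2$.

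Finally, a greedy colouring (process the $2^n$ vertices in any order, each time using the smallest colour not yet used by an already-coloured neighbour) produces a valid colouring with at most $\Delta+1 = 2^{n/2+1}-1$ colours, so $\chi(\GNECC_{F,W'}) \leq 2^{n/2+1}-1 < 2^{n/2+1}$. Therefore $\log_2 \chi(\GNECC_{F,W'}) < n/2+1$, and since $n/2+1$ is an integer, $\lceil \log_2 \chi(\GNECC_{F,W'}) \rceil \leq n/2+1$; together with the first paragraph this yields $\kappa(F,W) \leq n/2+1$.

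The substance of the argument is carried entirely by Lemmas~\ref{lem13} and~\ref{lem14}, which are already established; the only genuinely new step is the elementary observation that bijectivity converts the bound on the number of \emph{images} of the neighbours of $x$ into a bound on the number of neighbours of $x$ themselves, turning a structural fact about $\NECC$ pairs into a degree bound that greedy colouring can exploit. The one place to be careful is the constant bookkeeping: it is essential that $\Delta+1 = 2^{n/2+1}-1$ is \emph{strictly} below $2^{n/2+1}$, since this is exactly what keeps $\lceil \log_2 \chi \rceil$ at $n/2+1$ rather than $n/2+2$. (For odd $n$, splitting $V$ into halves of sizes $\lceil n/2\rceil$ and $\lfloor n/2\rfloor$ runs the same argument and gives the analogous bound $\lceil n/2 \rceil + 1$.)
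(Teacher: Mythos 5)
Your proposal is correct and follows essentially the same route as the paper: bijectivity turns the bound of Lemma~\ref{lem14} on the images of the neighbours of $x$ into a bound on the degree of $x$ in the $\GNECC$ graph, and then $\chi \leq \Delta+1 = 2^{n/2+1}-1$ gives $\kappa(F,W)\leq n/2+1$ via Theorem~\ref{theorem_chro}. You are in fact slightly more careful than the paper, which leaves implicit both the reduction to the simple sequential schedule (needed since Lemmas~\ref{lem13} and~\ref{lem14} are stated for that schedule) and the greedy-colouring justification of $\chi\leq\Delta+1$.
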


\begin{proof}
Let $F: \mathbb{B}^n \to \mathbb{B}^n$ be a bijective function. For all $x \in \mathbb{B}^n$, let $d(x)$ be the degree of $x$ in the $\GNECC$ graph. In other words, $\forall x,\ d(x) = |\{x'\ |\ (x,x') \in \NECC\}|$. Let $x \in \mathbb{B}^n$ be the configuration with maximal degree. We know by Lemma~\ref{lem14} that $|\{ F(x')\ |\  (x,x') \in \NECC\}| \leq 2^{n/2+1} - 2 $. However, since $F$ is a bijective function, we have $|\{ F(x')\ |\ (x,x') \in \NECC \}| = |\{x'\ |\ (x,x') \in \NECC\}|$ and then, $d(x) \leq 2^{n/2+1} - 2$. So, $\chi(\NECC) \leq 2^{n/2+1} - 1$. Thus, $\log_2(\chi(\GNECC)) \leq \dfrac{n}{2}+1$. As a result, $\kappa(F,W) \leq \dfrac{n}{2}+1$.
\end{proof}

\section{Conclusion and future research}

In this article, we were interested in the minimal number $\kappa$ of additional automata that a SBAN associated with a block-sequential update schedule needs to simulate
another given one with a parallel update schedule, in the worst case. The maximum value that $\kappa$ can take for all SBANs of size $n$ is denoted by $\kappa_n$. To answer this question, we introduced the concept of $\GNECC$ graph, a graph built from SBANs. We proved that the $\log$ of the chromatic number of this graph and the $\kappa$ of a SBAN are the same quantity. We achieved to bound $\kappa_n$ in the interval $[ n/2, 2n/3+2]$ and we conjectured that $\kappa_n$ is equal to $n/2$. To support this conjecture, we showed that the maximum clique number that a $\GNECC$ graph can have is equal to $2^{n/2}$. This means that the $\GNECC$ graph of a SBAN which would have a $\kappa$ greater than $n/2$ would have a $\GNECC$ graph with a chromatic number greater than the clique number. Finally, we showed that the conjecture is true (up to one extra automaton) if we restrain to SBANs whose global transition functions are bijective.\smallskip

More work is needed to close the gap ${[n/2, 2n/3+2]}$ left on $\kappa_n$. There is also a related problem where, given a SBAN with a parallel update schedule, we search the number of additional automata needed for a SBAN with any sequential update schedule (\ie{}, we do not impose any order on the update schedule) to simulate the first SBAN. We can see that for some BANs, this number is really smaller than when we impose an order. We can take the example used in Lemma~\ref{lem_kp_lower_bound}. The BAN has $n/2$ pairs of automata that exchange their values. If the mandatory order is to update one automaton only of every pair of automata and then the other we need $n/2$ additional automata. But if the order is free then we can update all the pairs of automata one at a time and do with only one additional automaton using a parity trick. This is a particular BAN and the problem of finding an upper bound in the worst case better than $\kappa_n$ is still open. 

Furthermore, we could study the issue presented in this article with other kinds of update schedules (which update many times each automata for instance) or other kinds of intrinsic simulations (where many automata can represent one simulated automaton for example).

These results could also help to design new SBANs behaving the same way as a given one, with different update schedule, and as small as possible.
Associated with the concept of functional modularity, we could also use them to replace a small functional module with an unexpected behavior in some situations by another module that is more robust to schedule variations.\bigskip

\noindent \emph{Acknowledgements.} This work has been partially supported by the project PACA APEX FRI.


\end{document}